%% file: arxiv.tex
\documentclass[runningheads]{llncs}

\usepackage[T1]{fontenc}
\usepackage{graphicx}
\usepackage{cite}

\makeatletter

\let\c@theorem\relax    
        
\let\c@lemma\relax      
    
\let\c@corollary\relax  
       
\let\c@remark\relax     
\makeatother
\usepackage{amsmath}
\usepackage{amssymb}
\usepackage{amsthm}
\usepackage{thmtools}
\usepackage{thm-restate}
\makeatletter
\let\thmt@theoremdefiner\thmt@original@newtheorem
\def\@thmcountersep{.}
\makeatother

\input{mycmd.tex}

\begin{document}

\title{\gen: Causal Compression for Sharded Blockchains}
\titlerunning{\gen: Causal Compression for Sharded Blockchains}
\author{Shir Cohen \and
Adam Alon \and
Raz Omessi \and
Amir Sarid \and
Dana Shamir \and
Ofir Zohar}

\authorrunning{S. Cohen et al.}
\institute{Gen Labs, Tel Aviv, Israel\\
\email{shir.co@gmail.com, adamalonil@gmail.com, romessi@gmail.com,\\
amir@sarid.io, dana.shamir01@gmail.com, ofir.z@genlabs.co}}
\maketitle              %
\begin{abstract}

We present \gen{}, a sharded blockchain that scales execution horizontally while allowing dependent cross-shard operations to progress within a single block.
Existing sharded systems typically place coordination between dependent cross-shard steps, making latency grow with causal depth.

\gen{} localizes atomicity to individual accounts and executes cross-account work asynchronously.
Using an \emph{execute-before-agree} architecture, dependent operations execute across shards as they become available, while consensus later validates and commits the resulting schedule.
This enables \emph{Causal Compression}: cross-shard latency need not grow with the causal depth of a computation.
\gen{} scales single-validator execution to 1.07M requests/s and
four-validator execution to 193K committed requests/s, while reducing
16-hop causal-chain latency by 8.2$\times$ versus deferred execution.

\keywords{Blockchain \and Sharding \and Byzantine Fault Tolerance}
\end{abstract}

\input{intro_tight.tex}
\input{model.tex}

\input{cluster_new.tex}

\input{smr.tex}

\input{perf_tight.tex}

\input{eval_new.tex}

\input{related_tight.tex}

\begin{credits}
\subsubsection{\ackname}
We thank Iddan Kfir, Gal Sadeh, and David Lehavi for helpful discussions and valuable feedback on this work.
\end{credits}

\newpage
\bibliographystyle{splncs04}
\bibliography{ref}

\appendix
\input{appendix/appendix_algs.tex}

\input{appendix/appendix.tex}
\input{appendix/appendix_smr.tex}

\input{appendix/appendix_concise_discussion.tex}

\end{document}

%% file: mycmd.tex
\usepackage{amsthm}
\usepackage{graphicx}

\usepackage{thmtools, thm-restate}

\usepackage{tikz}

\newcommand{\circled}[1]{\tikz[baseline=(char.base)]{
    \node[shape=circle,fill=black,inner sep=1pt,text=white] (char) {\sffamily\textbf{\footnotesize #1}};}}

\usepackage{algorithm} 
\usepackage{algpseudocode} 

\usepackage{xcolor}

\usepackage{hyperref}

\usepackage{subcaption}

\newcommand{\gen}{GridSMR}
\newcommand{\algapp}{\ (Appendix~\ref{appendix:algorithms})}

\newtheorem{theorem}{Theorem}[section]
\newtheorem{corollary}{Corollary}[theorem]
\newtheorem{lemma}[theorem]{Lemma}
\newtheorem{assumption}[theorem]{Assumption}

\newtheorem{remark}[theorem]{Remark}

\algdef{SE}[UPON]{Upon}{EndUpon}[1]{\algorithmicupon\ #1\ \algorithmicdo}{\algorithmicend\ \algorithmicupon}
\algtext*{EndUpon}
\newcommand{\algorithmicupon}{\textbf{upon}}

\algdef{SE}[FOR]{ForAll}{EndForAll}[1]{\textbf{for all}\ #1\ \textbf{in parallel do}}{\textbf{end for}}
\algtext*{EndForAll}

%% file: intro_tight.tex
\section{Introduction}
\label{sec:intro}

The blockchain ecosystem continues its search for scalable Layer 1 (L1) infrastructure capable of supporting the next generation of decentralized applications, including latency-sensitive DeFi applications~\cite{gould2013limit, hyperliquid, dydx-v4}
and agentic payments~\cite{davidovic2026agentic}.
While consensus protocols have approached theoretical limits, execution remains a central scalability challenge.

We present \gen{}, a sharded smart-contract platform designed to scale execution horizontally without imposing a consensus delay on every dependent cross-shard step.
Independent accounts execute concurrently across shards, while chains of dependent cross-shard operations can progress within a single agreement interval.
As a result, \gen{} combines increasing execution parallelism with low-latency cross-shard computation.

Blockchains conventionally structure execution around transactions.
But ``transaction'' is not merely terminology: it imports ACID semantics from database systems~\cite{haerder1983principles}.
In particular, atomicity requires all state touched by a transaction to be updated all-or-nothing.
Sharding complicates this model by partitioning state across independently executing shards.
When a transaction spans multiple shards, preserving atomicity therefore requires coordination across them.

We challenge the assumption that ACID guarantees must be provided system-wide by the L1.
Large-scale distributed systems commonly narrow transactional scope through partitioned state ownership,
asynchronous communication, and weaker consistency models~\cite{pritchett2008base,thones2015microservices,vogels2009eventually}.
Blockchains have moved partially in this direction.
Guerraoui et al.~\cite{guerraoui2019consensus} explore such a model for cryptocurrency systems,
while Sui~\cite{sui, danezis2022narwhal} extends independent state ownership to a general-purpose L1.
However, Sui distinguishes owned from shared objects, with operations on shared objects still requiring coordinated ordering through consensus.

\gen{} adopts independently owned, asynchronous execution as the uniform execution model of the L1.
State is partitioned into \emph{accounts}, each of which owns its state and executes \emph{activations} atomically over that state.
Accounts do not directly modify one another.
Instead, computation across accounts proceeds by message passing: an activation may generate \emph{continuations} targeting other accounts.
Each continuation is an activation to be executed by its target account and may itself generate further continuations.
A client request can therefore unfold as a tree of causally dependent activations spanning many accounts and shards.
Atomicity is local to an activation rather than to the entire tree.

An activation tree is not a transaction:
if one continuation fails, preceding activations are not automatically unwound,
and applications requiring atomicity across accounts must implement that coordination themselves.
This weakens the default transactional semantics, but does not remove the ability to coordinate across accounts.
Applications can introduce stronger coordination where needed,
for example by serializing sensitive operations or placing state that must be updated atomically in the same account,
while leaving independent execution paths asynchronous.
Recent sharded AMM designs similarly partition state to gain parallelism while recovering
application-specific guarantees where necessary~\cite{chen2024samm,aanes2023automated}.
Such coordination reduces parallelism only for the affected operations,
rather than imposing that cost on every cross-account computation.
\gen{} therefore replaces mandatory system-wide atomicity with application-selective coordination.

This model shifts a different responsibility to the L1:
it must reliably carry asynchronous computation to completion.
In particular, if activation $A$ executes and generates continuation $B$,
then $B$ must eventually execute -- a property we call \emph{Causal Liveness}.
Standard SMR liveness does not imply it:
a system may continue processing new client requests while indefinitely failing to execute previously generated work.

Existing sharded blockchains also use asynchronous messages for cross-shard communication~\cite{ton,near}.
The key difference is when causally generated work becomes eligible for execution.
Consider three accounts $A$, $B$, and $C$ on different shards,
where $A$ invokes $B$, and the result of $B$ causes an invocation of $C$:
$
    A_{\text{shard 1}}
    \rightarrow
    B_{\text{shard 2}}
    \rightarrow
    C_{\text{shard 3}}.
    $
In receipt-based designs, $B$ can execute only after the step containing $A$ crosses an agreement boundary,
and $C$ must wait for another.
Each sequential cross-shard step therefore consumes an agreement interval,
making latency linear in causal depth.
\gen{} instead lets generated work execute as soon as it is delivered:
$A \rightarrow B \rightarrow C$ may all progress before the next agreement step.

\gen{} partitions accounts across execution shards and replicates the full set of shards across independently operated clusters.
Thus, sharding scales execution within each replica, while Byzantine replication operates across complete clusters.
The shards of a cluster are co-located,
so a continuation that crosses shards travels over a fast intra-cluster path,
while the slower inter-cluster path carries only replication and consensus traffic.
During each block interval, one cluster acts as leader and executes activations across its shards as they become available.
It produces per-shard traces recording client-submitted activations,
internally generated continuations, and their execution order.
Follower clusters independently re-execute and validate these traces before consensus finalizes the block.

\begin{figure}[t!]
    \centering
    \includegraphics[width=0.67\textwidth]{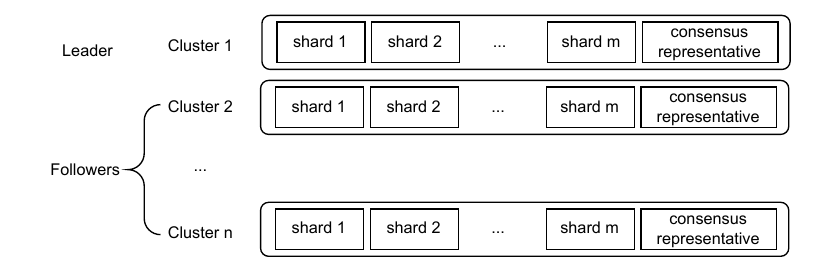}
    \caption{\gen{} architecture: clusters replicate execution across shards, with one leader cluster and multiple follower clusters.
    }
    \label{fig:gensmr}

\end{figure}

Recording the execution order is necessary because activation execution is deterministic given an account state and input,
while the concurrent cross-shard schedule is not:
continuations generated on different shards may arrive at a destination shard in different orders and thereby induce different states.
\gen{} therefore uses a \emph{speculative execute-before-agree} pipeline:
execution determines a candidate schedule,
and consensus commits that schedule only after independent validation.
The execution is provisional rather than final;
if the proposed block is not committed, speculative state may be discarded or rolled back.

The broader idea of executing before final agreement has substantial precedent in distributed systems~\cite{kotla2007zyzzyva,wester2009tolerating,nopaxos,speculative-paxos,fabric}.
These systems show that execution need not always follow a previously agreed total order.
For our architecture, execute-before-agree is fundamental:
multiple dependent cross-shard activations can execute within one agreement interval only if execution advances before their schedule is globally agreed.

The challenge is validating an execution-generated schedule that unfolds causally across independently executing shards,
without trusting the replica that produced it.
Prior work addresses parts of this problem.
Zyzzyva supports speculative execution under Byzantine faults, but replicas execute an order first proposed by the primary~\cite{kotla2007zyzzyva}.
Rex is closer to \gen{}: execution determines a causal trace that is later agreed upon and replayed, but Rex assumes crash faults~\cite{guo2014rex}.
\gen{} brings this execute-generated trace model into a Byzantine setting where execution itself may generate new cross-shard work.
Followers must therefore validate both the recorded execution order and the provenance of cross-shard continuations.
Crucially, this validation must not reintroduce cross-shard coordination.

The key to validating such an execution without restoring cross-shard coordination is that validity need not be established after every intermediate activation.
It is sufficient for the execution represented by a block to satisfy the system's validity conditions at the block boundary.
Because execution is deterministic given the initial account state and per-shard execution order,
follower shards can replay their traces independently, without maintaining a globally synchronized intermediate execution prefix or reproducing the leader's real-time continuation arrival order.
Moreover, validation need not delay ongoing execution:
shards may continue executing speculatively while earlier blocks are being validated and finalized.

The result is that consensus no longer sits between dependent activations:
a continuation may execute as soon as it becomes available at its destination,
and that activation may immediately generate further work.
Multiple causally dependent cross-shard activations can therefore execute within one agreement interval instead of requiring a new agreement interval for every causal step -- a property we call \emph{Causal Compression}.
Subject to available execution time,
a causal chain spanning multiple shards can execute and commit within a single block rather than incur block latency proportional to its causal depth.
At the same time, independent accounts execute concurrently across shards. 
The architecture therefore supports both parallel execution and low-latency causal computation without requiring agreement between every cross-shard step.

Our contributions are:

\begin{itemize}

    \item \textbf{Causal execute-before-agree:}
    a Byzantine-replicated sharded architecture that executes causally generated cross-shard work before agreement and validates the resulting schedule independently.

    \item \textbf{Causal Compression:}
    dependent cross-shard chains can execute within one agreement interval rather than one interval per causal step.

    \item \textbf{Causal Liveness:}
    a formal progress guarantee for continuations generated by committed computation.
    
    \item \textbf{Evaluation:} \gen{} scales execution to 1.07M requests/s
and reduces 16-hop causal-chain latency by 8.2$\times$; on NEAR testnet, every measured causal hop required an additional agreement round.

\end{itemize}

%% file: model.tex
\section{Model}
\label{sec:model}

\paragraph{System and fault model.}
\gen{} consists of $n$ clusters.
Each cluster contains $m$ shards and one consensus representative (Figure~\ref{fig:gensmr}).
\gen{} uses a black-box Byzantine SMR protocol to commit global blocks across clusters.
Every cluster stores a complete replica of the application state.
All clusters use the same fixed partition of that state among their shards,
and this partition is known to every cluster.
Thus, \gen{} shards execution within each replicated cluster; replication itself remains across complete clusters.
At any time, one cluster is designated as the \emph{leader} and the remaining $n-1$ clusters act as \emph{followers}.

The cluster, typically operated as a single administrative unit, is the unit of failure.
Shards within a correct cluster trust one another and communicate without consensus.
If any shard of a cluster is compromised, we consider the entire cluster Byzantine.
The adversary may control at most $f<n/3$ clusters.
Accordingly, \gen{} assumes no separate fault threshold for individual shards.

Communication is partially synchronous.
After an unknown global stabilization time (GST), inter-cluster and intra-cluster message delays are bounded.
We let $\delta_{\mathit{exec}}$ bound the time for a correct follower cluster to process a global block, and $\delta_{\mathit{intra}}$ bound intra-cluster communication delay.

\paragraph{Execution model.}
In \gen{}, application state is partitioned among a set of accounts $\mathcal{A}$,
each of which owns disjoint state and is assigned to one shard.
An account is therefore never split across shards, and a shard's size is the combined state of the
accounts assigned to it; we treat this assignment as fixed and leave repartitioning outside our scope.
Computation proceeds through \emph{activations}.

An activation is a uniquely identified request containing a target account and an application-defined payload.
An activation may read or modify only its target account's state.
It is therefore executed by the shard to which its target account is assigned.
An activation submitted by a client is \emph{external}.
Executing an activation $A$ against its target account's state $s$ produces
an updated state $s'$ and a sequence $C$ of zero or more new activations,
called \emph{continuations}: $\mathsf{Execute}(s,A)=(s',C)$.
A continuation's identifier cryptographically binds its immutable contents;
execution-assigned metadata such as its timestamp and block identifier is excluded.
Execution is deterministic: the same pair $(s,A)$ always produces the same
pair $(s',C)$.
This property allows followers to verify a leader's proposal through re-execution.

\paragraph{Causality.}
We define $A \prec A'$ as the transitive closure of causal generation ($A \rightarrow A'$) and same-shard execution order.
This is the happens-before relation over activations.

\paragraph{Problem definition.}
For replication, executions are grouped into global blocks.
Let $k$ denote a block index.
A global block $B^k$ contains one shard block $B_i^k$ per shard $i$, where
$B_i^k$ is the ordered sequence of activations executed by shard $i$ since
its preceding block boundary.
Applied to the preceding committed state, these per-shard execution orders
determine the resulting state and generated continuations.
The shard blocks include both external activations and internally generated
continuations. Because concurrently generated continuations may arrive in
different orders at different clusters, replicas cannot derive a common
per-shard order from their local arrival orders. The per-shard execution
orders are therefore part of the global block agreed upon through SMR.

\gen{} implements state machine replication and must satisfy:

\begin{itemize}
    \item \textbf{Agreement:} all correct replicas commit the same sequence
    of blocks in the same order.
    \item \textbf{Validity:} every committed block satisfies the protocol's
    validity predicate.
    \item \textbf{Liveness:} every activation submitted by a correct client
    eventually executes.
\end{itemize}

As is standard, Liveness is eventual: the leader controls mempool selection
within its view, so time-bounded inclusion is a property of the underlying
protocol's leader rotation rather than of the execution layer.
Cross-shard execution creates an additional obligation.
Standard Liveness covers requests submitted by correct clients.
It does not imply progress for continuations, which are generated internally
and may cross shard boundaries.
Without an explicit guarantee, a committed activation tree could therefore
stall after any cross-shard step even though standard Liveness continues to hold.
We therefore define \emph{Causal Liveness}, which extends progress to
internally generated continuations:

\begin{definition}[Causal Liveness]
If a committed activation $A$ produces a continuation $A'$, then $A'$
eventually executes.
\end{definition}

%% file: cluster_new.tex
\section{The \gen{} Architecture}
\label{sec:architecture}

\begin{figure}[t]
    \centering

    \begin{minipage}[t]{0.65\textwidth}
        \centering
        \includegraphics[width=\linewidth]{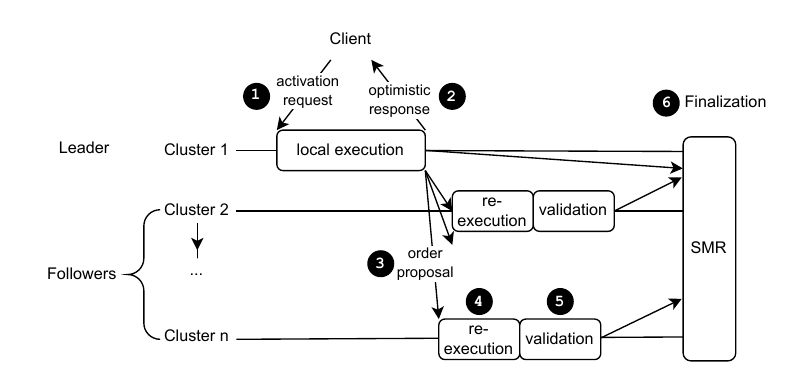}
        \par\smallskip
        \textbf{(a) Execution workflow}
    \end{minipage}
    \hfill
    \begin{minipage}[t]{0.34\textwidth}
        \centering
        \includegraphics[width=\linewidth]{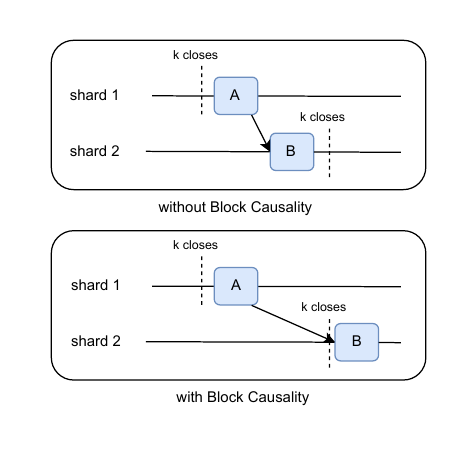}
        \par\smallskip
        \textbf{(b) Block Causality}
    \end{minipage}

    \caption{\gen{} execution workflow and Block Causality.
    }

    \label{fig:execution-and-causality}

\vspace{-0.2em}
\begin{minipage}{0.97\textwidth}
\small

\textbf{(a)}
A client submits an external activation to the leader \circled{1}, where the
target shard executes it immediately and may generate cross-shard continuations
without waiting for consensus.
The leader may return an optimistic result \circled{2}.
At the block boundary, each shard's recorded execution order forms part of a
global execution-trace proposal sent to follower clusters \circled{3}.
Followers re-execute the proposed traces against their local state \circled{4}
and verify authenticity, continuation generation, and causal dependencies
\circled{5}; shards verify independently and need not reproduce the leader's
continuation arrival order.
Once accepted, the cluster participates in inter-cluster SMR, which commits a
global block sequence \circled{6}.
Clusters whose speculative state diverges from the committed block roll back
and apply the committed execution.
\textbf{(b)}
Block Causality ensures that activations appear in monotonically ordered blocks
with respect to their causal dependencies.
\end{minipage}
\end{figure}

Building on Section~\ref{sec:model}, we first specify the per-shard protocol
state, its invariants, and leader execution
(Sections~\ref{subsection:data_structures}--\ref{subsection:leader_protocol});
we then describe block construction and replication
(Section~\ref{subsection:blocks_and_replication}), followed by follower
verification (Section~\ref{subsection:follower_protocol}). Leader and follower
shards run Algorithms~\ref{alg:shard_at_leader}
and~\ref{alg:shard_at_follower} respectively, over the shared structures of
Algorithm~\ref{alg:util_alg}\algapp.
Figure~\ref{fig:execution-and-causality}a summarizes the end-to-end execution and validation pipeline.

\subsection{Per-Shard State}
\label{subsection:data_structures}

Each activation carries protocol metadata:
$sid$ identifies its destination shard, $ts$ is its Lamport timestamp,
and $is\_external$ and $signature$ distinguish and authenticate client submissions.
A \textbf{ShardBlock} packages one shard's ordered execution trace for replication.
The corresponding data structures appear in Algorithm~\ref{alg:util_alg}\algapp.

Each shard $i$ maintains the following ordered lists:
\begin{description}
    \item [\textbf{$mempool_i$}] An ordered list of activations pending execution.
    \item [\textbf{$executed_i$}] An ordered list of activations executed by
    the shard, including both finalized activations and activations awaiting
    SMR finalization.
\end{description}

Activations enter the mempool from two sources: external activations submitted by clients,
 and continuations spawned by previously executed activations.
We make the following assumption about the external activations.

\begin{assumption}
    \label{assumption:external_activation}
    Every external activation in the leader's mempool is correctly signed by a client,
    and each external activation appears at most once in
    its associated shard's mempool
     (duplicate submissions from clients are filtered before insertion).
\end{assumption}

\subsection{Execution Model and Correctness Properties}
\label{subsection:system_state_invariants}

Shards execute pending activations concurrently, while each shard serializes its own execution.
Formally, an execution at the leader consists of per-shard sequences $\{\sigma_i\}_{i=1}^m$ where $\sigma_i = \sigma_i^0 \sigma_i^1 \ldots$ represents steps taken by shard $i$,
each being either an ApplyActivation invocation (Algorithm~\ref{alg:shard_at_leader}) or adding an activation to mempool (Algorithm~\ref{alg:shard_at_leader}).
Shards execute concurrently;
 steps within each $\sigma_i$ are totally ordered, but steps across different shards are only partially ordered by causal dependencies.

We model ApplyActivation as atomic: it acquires a lock on shard $i$ before executing and releases it afterward,
ensuring each shard processes at most one activation at a time.
Adding an activation to the mempool is lock-free and can occur concurrently with execution.
The relative ordering between a mempool addition and a concurrent ApplyActivation is determined by which operation accesses
the mempool first during execution.
This model ensures the execution history is equivalent to some sequential execution.

The system state consists of $(executed_i, mempool_i)$ for all shards $i \in \{1, \ldots, m\}$, together with all account state and shard metadata (e.g., local timestamps).

\paragraph{System State Invariants}
We now formalize what constitutes a valid system state.
These invariants define valid execution at the leader and later (Section~\ref{subsection:follower_protocol}) 
serve as the validity predicate for follower verification.
A valid system state satisfies the following properties:

\begin{description}
    \item [\textbf{Authenticity}] Every external activation in $executed_i$ is signed correctly by a client.
    \item [\textbf{Integrity}] For every continuation $A'$ in $executed_i$ or $mempool_i$, there exists an activation $A \in executed_j$ such that $A \rightarrow A'$.
    \item [\textbf{No Duplication}] Each activation appears at most once in $executed_i \cup mempool_i$.
    \item [\textbf{Well-formedness}] Every activation is included in its associated shard.
    \item [\textbf{Happens-before Relation}] For any two activations $A$ and $A'$ in $executed_i$ and $executed_j$ respectively, if $A\prec A'$, then $A.ts < A'.ts$.
\end{description}

Note that Authenticity and Integrity are dual properties that validate activation origins:
Authenticity verifies external activations come from legitimate clients,
 while Integrity verifies continuations were correctly generated by prior executions.

\subsection{Leader Protocol}
\label{subsection:leader_protocol}

The leader per-shard protocol is described in Algorithms~\ref{alg:util_alg} and~\ref{alg:shard_at_leader}\algapp.
Its core is \textsc{ApplyActivation}, which repeatedly selects and processes
eligible activations from the mempool.
The leader repeatedly selects and processes an activation from the mempool, subject to dependencies we formalize later.
The leader first validates external activations by checking their signatures.
Next, it assigns a Lamport timestamp to the activation by taking the maximum of the shard's current timestamp
and the activation's timestamp (0 for external activations), 
then incrementing by one. This ensures the happens-before relation is preserved.
The activation is then added to the $executed$ list and executes.
The execute function performs two critical roles: it runs the activation's logic to generate any follow-up continuations, and it updates the system state
according to the writes produced during execution.
Each generated continuation inherits the timestamp of its parent activation, maintaining causal ordering across shards.
Finally, the generated continuations are sent to their destination shards.
When a shard receives a continuation, it adds it to its mempool, making it available for future execution.

This protocol ensures that activations are executed in a causally consistent order while 
maintaining the integrity and correctness properties of the system state,
 which we formally prove in Appendix~\ref{appendix:proofs}:

\begin{restatable}{theorem}{statevaliditytheorem}\label{theorem:validity_of_system_state}
    Algorithm~\ref{alg:shard_at_leader} preserves a valid system state.
\end{restatable}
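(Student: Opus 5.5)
We argue by induction on the length of the leader's execution history. By the atomicity model of Section~\ref{subsection:system_state_invariants}, the concurrent per-shard sequences $\{\sigma_i\}$ are equivalent to a single sequential interleaving of steps. Each step is either an ApplyActivation invocation on some shard or a mempool insertion. The base case is the initial state, in which all $executed_i$ and $mempool_i$ are empty, so every invariant holds vacuously. For the inductive step we assume the state before a step is valid and show that each of the five invariants (Authenticity, Integrity, No Duplication, Well-formedness, Happens-before) survives each of the two step types. Appending to $executed_i$ and inserting into mempools are the only mutations of the protected lists.

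\emph{Mempool insertion.} An insertion is either an external activation or a continuation sent by a shard executing its parent $A$.
\begin{itemize}
\item For external activations, Assumption~\ref{assumption:external_activation} gives a correct signature and no duplicate in that shard's mempool.
\item For a continuation $A'$, the sender had already appended $A$ to $executed_j$ before running $\mathsf{Execute}$, so Integrity holds.
\item No Duplication for continuations follows because identifiers cryptographically bind contents and each is generated by exactly one execution of a unique parent. Uniqueness of the parent's own appearance is itself the inductive No Duplication hypothesis.
\item Well-formedness holds because a continuation is routed to the shard given by its $sid$, which is the shard owning its target account.
\end{itemize}

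\emph{ApplyActivation.} The shard moves an activation from $mempool_i$ to $executed_i$.
\begin{itemize}
\item Authenticity holds because external activations are signature-checked before being appended.
\item Integrity and Well-formedness are inherited from the activation's mempool membership.
\item No Duplication holds because the activation is removed from the mempool as it enters $executed_i$, so the union $executed_i\cup mempool_i$ gains no new element.
\end{itemize}

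The main obstacle is Happens-before, since $\prec$ is a transitive closure mixing two kinds of edges. We first show that each generating edge strictly increases timestamps, and then conclude by transitivity of $<$.
\begin{itemize}
\item \emph{Same-shard order.} Shard $i$ assigns $ts := \max(\mathit{local}_i, A.ts)+1$ and sets $\mathit{local}_i$ to this value. Hence successive activations in $executed_i$ receive strictly increasing timestamps.
\item \emph{Causal generation} $A\rightarrow A'$. The continuation $A'$ inherits $A.ts$ at creation. When $A'$ executes, its final timestamp is at least $A.ts+1 > A.ts$.
\end{itemize}
Care is needed because the invariant quantifies only over executed activations, whose timestamps are final. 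A chain witnessing $A\prec A'$ may pass through intermediate activations. Every element of such a chain that precedes an executed activation must itself be executed, because a continuation is created only by executing its parent and same-shard predecessors are executed earlier. So every link carries final timestamps, and the inequality composes along the chain. The new step can only add edges ending at the newly executed activation. For those edges we use that its timestamp exceeds both the shard's previous maximum and its parent's timestamp, which together cover all its immediate $\prec$-predecessors.
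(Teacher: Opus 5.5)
Your proposal is correct and matches the paper's proof nearly step for step: induction over the sequentialized step history, a case split into mempool insertion versus ApplyActivation for each invariant, and Lamport-timestamp monotonicity plus inheritance for Happens-before. Your edge-by-edge treatment of the transitive case is slightly more careful than the paper's, whose transitive case considers only a final causal-generation edge. You also cover chains ending in a same-shard edge, and you note that every chain element preceding an executed activation is itself executed.
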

  
\paragraph{Leader Replacement.}

When the leader cluster fails or becomes unresponsive, a new leader cluster is selected to continue execution.
The new leader resumes from the system state that all correct clusters maintain (proven in Appendix~\ref{sec:gen_implements_smr}),
which includes all continuations generated during execution.

External activations submitted to the failed leader but not yet executed are not included in this shared state.
Clients detect leader failure through timeouts and resubmit pending activations to the new leader.
Assumption~\ref{assumption:external_activation} prevents duplicate execution by filtering resubmitted activations that already appear in the system state;
this assumption is straightforward to enforce in practice through standard deduplication mechanisms.
In \gen{}, leader selection is coupled with the underlying SMR protocol.
 The cluster currently leading the consensus protocol
  (e.g., PBFT leader~\cite{castro1999practical})
   serves as the execution leader.
 View changes in the consensus protocol trigger execution leader changes,
  with the new leader initializing from some agreed upon checkpoint.

\subsection{Blocks and Replication}
\label{subsection:blocks_and_replication}

The leader continuously executes activations, whereas replication proceeds in
discrete blocks.
Each shard therefore divides its local execution history into ordered intervals.
At the end of an interval, \textsc{CloseBlock} packages the activations executed
since the preceding boundary as a shard block, sends it to the corresponding
follower shards and the leader's consensus representative, and advances the local
block identifier $bid$.
Each executed activation records the identifier $act.bid$ of the interval in which it executed.

\textsc{ApplyActivation} and \textsc{CloseBlock} synchronize their access to
$bid$; for clarity, the pseudocode models both operations as atomic.
\textsc{CloseBlock} performs only bookkeeping: it does not execute an activation
or modify account state, $executed_i$, or $mempool_i$.
Thus, adding \textsc{CloseBlock} as a protocol step preserves the system state
invariants established above.

\paragraph{Valid Proposals.}
The leader sends each shard's block to followers as a \emph{shard proposal}:
the ordered activations executed by that shard during the block interval.
A \emph{global proposal} for block $k$ collects one shard proposal from each
of the $m$ shards. It must maintain system correctness when applied. Formally:

\begin{definition}[Valid Global Proposal]
A global proposal is \emph{valid} if applying all shard proposals to a correct system state results in another correct system state, maintaining all system state invariants.
\end{definition}

Since causal dependencies cross shards, naively batching independently formed
shard proposals may violate validity. We therefore introduce \emph{Block
Causality}, which requires activations to appear in monotonically ordered
blocks with respect to their dependencies:

\noindent\textbf{Block Causality.}
For any two activations $A$ and $A'$ in $executed_i$ and $executed_j$,
respectively, if $A \prec A'$, then $A.bid \leq A'.bid$.

Block Causality constrains how independently closed shard intervals align
(Figure~\ref{fig:execution-and-causality}b). Suppose shard 1 has advanced to block
$k+1$ while shard 2 is still constructing block $k$. If activation $a$ now
executes on shard 1 and generates continuation $a'$ for shard 2, executing
$a'$ in block $k$ would place it before its parent. The leader therefore sets
$a'.min\_bid=a.bid=k+1$, and shard 2 cannot select $a'$ until it reaches that
block.

The bound does not impose an unnecessary block barrier: when a continuation
reaches its destination before that shard closes the parent's block, it may
execute in the same block. Repeating this across shards allows a causal chain
to execute and commit in one global block, a property we call \emph{Causal
Compression}.

The minimum block identifier preserves Block Causality, with proofs deferred to
Appendix~\ref{appendix:blocks_and_replication_proofs}

\begin{restatable}{lemma}{blockcausality}\label{lemma:leader_maintains_block_causality}
    Algorithm~\ref{alg:shard_at_leader} maintains Block Causality.
\end{restatable}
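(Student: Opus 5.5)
We argue by induction over the leader execution, viewed as the interleaving of per-shard steps $\{\sigma_i\}$ (ApplyActivation, mempool addition, CloseBlock). The invariant to carry is the Block Causality condition over all pairs of activations in $\bigcup_i executed_i$. We strengthen it with an auxiliary invariant over mempools: every continuation $A'$ in $mempool_j$ generated by $A$ satisfies $A'.min\_bid = A.bid$. Combined with the selection rule, this means that whenever $A'$ is executed, $A'.bid \geq A'.min\_bid$.

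We also record a monotonicity fact. Each shard's $bid$ is non-decreasing, since only CloseBlock changes it and it increments it. Hence successive entries of $executed_i$ carry non-decreasing $bid$ values.

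\textbf{Base and trivial steps.} Initially all lists are empty. A mempool addition only inserts a continuation, with $min\_bid$ already stamped by its parent's ApplyActivation, so it preserves both invariants. CloseBlock modifies neither $executed_i$ nor the $bid$ fields of already executed activations, so it also preserves them.

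\textbf{Main step.} Consider ApplyActivation on shard $j$ executing $A'$. It receives $A'.bid = bid_j \ge A'.min\_bid$ by the eligibility rule. We must show $A.bid \le A'.bid$ for every already executed $A$ with $A \prec A'$. New pairs in which $A'$ is the predecessor cannot yet have a successor in $executed$, because $A'$'s continuations are created only now. The only exception is same-shard successors, and those are later, so they are handled when they execute.

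Since $\prec$ is the transitive closure of the two generating relations, we need the one-step cases plus transitivity:
\begin{itemize}
\item \emph{Same-shard order} ($A$ precedes $A'$ in $executed_j$): this follows from monotonicity of $bid_j$.
\item \emph{Causal generation} ($A \rightarrow A'$): we have $A.bid = A'.min\_bid \le A'.bid$.
\item \emph{General case}: take a chain $A = X_0, X_1, \dots, X_r = A'$ of one-step relations. Every $X_\ell$ with $\ell<r$ was executed before $A'$; a generator must execute before its continuation, and a same-shard predecessor precedes it in $executed$. Therefore the one-step inequalities hold along the chain by the inductive hypothesis and the two cases above, and chaining the $\le$ relations gives the result.
\end{itemize}

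\textbf{Expected obstacle.} The delicate point is justifying that every intermediate element of a $\prec$-chain into $A'$ is already in some $executed$ list with its $bid$ fixed. This requires the definition of $\prec$ to be restricted to executed activations and generation to precede execution. We would make this precise using Integrity from Theorem~\ref{theorem:validity_of_system_state}, which guarantees each continuation's parent is in some $executed$ list.

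A second subtlety is the concurrency between CloseBlock and ApplyActivation in reading $bid$. The paper's atomicity assumption on their synchronized access to $bid$ removes it, ensuring that the $bid$ a parent stamps into $min\_bid$ equals the $bid$ it records for itself.
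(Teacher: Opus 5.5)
Your proof is correct and follows essentially the same route as the paper. It uses the same two one-step cases: same-shard order via monotonicity of the shard's $bid$, and direct generation via $A'.min\_bid = A.bid$ together with the eligibility filter $min\_bid \leq bid$. These are then chained along the causal chain to cover the transitive closure. Your outer induction over execution steps, and your explicit argument that every intermediate element of the chain is already executed with a fixed $bid$, make rigorous the step the paper settles with ``induction on the length of the causal chain.''
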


With Block Causality established, we can show that the leader generates valid proposals:

\begin{restatable}{lemma}{correctleadercreatesvalidproposal}\label{lemma:correct_leader_creates_valid_proposal}
    A correct leader executing CloseBlock at all shards creates a valid global proposal.
\end{restatable}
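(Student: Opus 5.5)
We prove Lemma~\ref{lemma:correct_leader_creates_valid_proposal} by reducing validity of the global proposal to Theorem~\ref{theorem:validity_of_system_state} together with Lemma~\ref{lemma:leader_maintains_block_causality}. Let $S^{k-1}$ be the correct system state at the preceding block boundary, assumed to satisfy all invariants. Let $\{B_i^k\}_{i=1}^m$ be the shard proposals that \textsc{CloseBlock} emits at each shard for interval $k$. The goal is to show that applying all $B_i^k$ to $S^{k-1}$ yields a state satisfying Authenticity, Integrity, No Duplication, Well-formedness, and the Happens-before Relation.

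The difficulty is that the shards close their blocks at different real times, so the leader may never pass through a single global state equal to ``$S^{k-1}$ plus exactly the activations with $bid\le k$.'' I will therefore build this state explicitly. Consider the leader's actual execution $\{\sigma_i\}$ and restrict it to the steps concerning activations $A$ with $A.bid \le k$. These are the \textsc{ApplyActivation} steps for such $A$, and the mempool additions of continuations produced by them. I then claim this restricted execution is itself a legal execution of Algorithm~\ref{alg:shard_at_leader} starting from the initial state.

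This claim is the main obstacle. It requires that the restricted history be \emph{causally closed}: every step kept must have all of its $\prec$-predecessors kept as well. Block Causality gives exactly this. If $A\prec A'$ and $A'.bid\le k$, then $A.bid\le A'.bid\le k$. Same-shard predecessors are handled by the fact that each shard's $bid$ is monotone along $\sigma_i$. Because the kept steps are downward closed under $\prec$, each kept \textsc{ApplyActivation} observes the same account state and produces the same continuations as in the full run, by determinism of $\mathsf{Execute}$. Each kept selection is also enabled, since its parent was executed and its $min\_bid$ constraint holds. Hence the restricted history is a valid run, and by Theorem~\ref{theorem:validity_of_system_state} its final state satisfies all invariants.

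It remains to identify that final state with the result of applying the proposals. Each shard's $executed_i$ in the restricted run is $executed_i(S^{k-1})\cdot B_i^k$. This holds because \textsc{CloseBlock} packages exactly the activations executed since the previous boundary, in execution order, and it modifies neither state nor lists. Each shard's $mempool_i$ is the set of pending continuations of those activations, together with the external activations not yet applied. Replaying the shard proposals deterministically from $S^{k-1}$ reproduces these lists and the account states. I would then spot-check the invariants that mention cross-shard objects. Integrity holds because every continuation in the new state has its parent executed with $bid\le k$, which Block Causality again guarantees. The Happens-before Relation holds because the timestamps are inherited unchanged from the full run. This completes the argument, with the causal-closure step being the only nonroutine part.
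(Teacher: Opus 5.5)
Your argument is correct, but it is organized differently from the paper's. The paper splits the invariants by scope. It takes the shard-local ones (Authenticity, No Duplication, Well-formedness) directly from Theorem~\ref{theorem:validity_of_system_state}, applied to each shard's execution prefix. For the cross-shard ones (Integrity, Happens-before Relation) it uses Lemma~\ref{lemma:leader_maintains_block_causality} and monotonicity of $bid$ to show that every $\prec$-predecessor of an activation in block $k$ lies either in an earlier block or in the same global proposal. You use Block Causality only once, to show that the steps with $bid\le k$ form a causally closed sub-execution that is itself a legal run of Algorithm~\ref{alg:shard_at_leader}. You then invoke Theorem~\ref{theorem:validity_of_system_state} wholesale and identify its final state with the replayed proposals by determinism.

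The paper's route is shorter, and it makes explicit the local versus cross-shard split that it reuses in the following remark and in the follower proofs. Your route confronts a point the paper leaves implicit. Shards close blocks at different real times, so the block-boundary state need not be a state the leader ever occupies, and ``every prefix of a correct leader's execution'' does not literally cover it; the paper also never addresses the mempool component of that state.

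To make your simulation airtight, address three details.
\begin{itemize}
\item The kept steps on a shard need not form a prefix of $\sigma_j$: a continuation of a kept parent may arrive after shard $j$ has already executed activations with $bid>k$. Say why dropping the intervening steps is harmless. Mempool insertions do not read shard state, and every dropped continuation has $min\_bid>k$, so it is ineligible at every kept selection.
\item FIFO selection over an order-preserving subset that still contains the chosen activation picks the same activation. Alternatively, observe that the proof of Theorem~\ref{theorem:validity_of_system_state} does not depend on which eligible activation \textsc{Select} returns.
\item Keep the mempool insertions of executed external activations and the \textsc{CloseBlock} steps through block $k$. Then the $bid$ values, and hence the $min\_bid$ filter and the assigned $act.bid$, coincide with those of the full run.
\end{itemize}
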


\subsection{Follower Protocol}
\label{subsection:follower_protocol}

In the previous subsection, blocks partitioned the leader's continuous execution into discrete units for replication.
Blocks also define the checkpoints at which followers validate execution before participating in consensus.
Follower shards receive the leader's block proposals and re-execute them locally,
forwarding a block to the cluster's consensus representative only if verification succeeds.
The pseudocode appears in Algorithms~\ref{alg:shard_at_follower} and \ref{alg:shard_at_follower_verification} (Appendix~\ref{appendix:algorithms}).

\paragraph{Follower State}

A follower, like the leader, maintains $executed$ and $mempool$ fields for each shard.
However, followers process blocks atomically -- either the entire block is applied, or none of its activations are.
During verification, \emph{speculated} and \emph{pending\_activations} hold the activations that would be added to \emph{executed} and \emph{mempool}.
If verification succeeds, these shadow structures are merged into permanent state (making the block application atomic).
Otherwise, they are discarded.
Followers also maintain \emph{ts\_at\_execution}, which records the local timestamp at which each continuation is executed
 and is used to verify the leader's Lamport timestamps.

Followers also maintain block-boundary checkpoints of
\emph{executed}, \emph{mempool}, the local timestamp, and account state.
If a block is not committed, the consensus representative restores the preceding
checkpoint (Section~\ref{sec:gen_consensus}), allowing the follower to safely
retry under a new proposal.

\subsubsection*{Protocol Overview}

When a follower receives a shard proposal for block $k$,
it first ensures that blocks are processed sequentially, buffering the proposal if $k \neq bid$.
The follower then invokes \textsc{IsValidBlock}, which verifies the proposal in two phases.

In the first phase, the follower speculatively re-executes the proposed activations in order.
For each activation, it checks shard assignment and the Block Causality constraint $min\_bid \leq bid$.
External activations can be verified immediately by checking signatures, duplication, and Lamport timestamp computation, since they have no dependencies
on other shards.
For continuations, the follower executes them speculatively, recording the local timestamp at execution and sending any generated continuations to their destination
shards within the cluster.
However, continuations cannot yet be fully verified: verification depends on receiving matching continuations from other shards.

In the second phase, the follower waits until $\delta_{\mathit{exec}}+\delta_{\mathit{intra}}$ from the start of block
processing, then verifies cross-shard continuations.
The verification function then validates that the leader's reported continuations match those generated locally.
For each continuation, it checks that a matching locally generated continuation exists in \emph{pending\_activations} or \emph{mempool}, and that its Lamport
timestamp was computed correctly.

If verification succeeds, the follower merges \emph{speculated} into \emph{executed} and \emph{pending\_activations} into \emph{mempool}, then
forwards the shard block to the consensus representative.
Otherwise, it reports the block as invalid and discards the speculative state.

This protocol yields an important correctness property: followers need not enforce cross-shard validity after each activation.
Activations on different shards may be re-executed in parallel, so a continuation may be processed before its parent completes on another follower
shard. 
The protocol checks cross-shard validity only at the block boundary, after re-executing the entire proposal and allowing its continuations to arrive.
Shadow state and atomic block application ensure that these temporary inconsistencies never become visible.

Appendix~\ref{appendix:follower_proofs} proves that successful block verification preserves the system state invariants
(Lemma~\ref{lemma:block_execution_correctness}) and that correct followers accept a correct leader's proposal under the stated timing bounds (Lemma~\ref{lemma:follower_completes_correct_leader_block}).

%% file: smr.tex
\section{State Machine Replication}
\label{sec:smr}

Traditional SMR systems replicate a single state machine across nodes, limiting scalability.
\gen{} implements SMR with horizontal scalability through sharding:
each replica is a cluster of shards that execute in parallel.
To coordinate between clusters, \gen{} uses an existing SMR protocol as a building block.
This creates a two-level structure:
sharded execution within clusters, and SMR-based consensus between clusters.

\subsection{The Underlying SMR Protocol}
\label{sec:smr_building_block}

\gen{} uses a standard SMR protocol as a black-box component to order global blocks and ensure agreement among clusters.
Each cluster has a consensus representative responsible for participating in the SMR protocol with representatives from other clusters.

For liveness, we assume the standard partial-synchrony progress condition of the underlying SMR protocol: after GST, its leader-replacement mechanism eventually
installs a correct leader that remains active long enough to complete a proposal.
At block height $k$, the representative submits the global block $B^k=(k,[B_1^k,\ldots,B_m^k]),
$
containing one verified shard block from each of the $m$ shards.

However, \gen's usage of SMR differs from traditional systems in several ways.
First, \gen{} uses execute-before-agree rather than commit-then-execute:
blocks are executed speculatively before consensus,
and SMR commit serves as after-the-fact validation.
The SMR ``execution'' step simply persists pre-computed state to disk.
Second, the SMR clients are the execution leader (system nodes), not end users.
End users interact with the execution layer via the mempool.
Third, and most significantly, the replicated global block is an execution trace rather than merely a batch of external requests.
Each shard block records the leader-selected order in which both external activations and internally generated continuations executed.
Followers replay and validate this proposed order instead of deriving an order from local continuation arrival times, which may differ across clusters.
The underlying SMR protocol orders and commits a sequence of global blocks; it does not reorder activations within their shard blocks.
This differs from traditional SMR, where ordered requests originate externally from clients.
\gen's validity predicate is whether executing a block maintains all system state invariants --
formally, whether the IsValidBlock function returns true at all shards.
Commit certificates for committed blocks are those of the underlying protocol, inherited unchanged
for use by light clients and external verifiers.

\subsection{Consensus Representative Protocol}
\label{sec:gen_consensus}

The consensus representative coordinates between a cluster's shards and the inter-cluster SMR protocol
(Appendix~\ref{appendix:smr_proofs}, Algorithm~\ref{alg:consensus_representative_alg}).
When shards complete block execution and verification, they send their results (valid or invalid) to the consensus representative.
The representative collects shard blocks with the same block id, binds them together to form the global block $B^k$, and proposes it to SMR.
The validity predicate for SMR is whether all shards within the cluster accepted their respective shard blocks --
that is, whether IsValidBlock returned true at all shards.
A correct representative votes only for a global block whose shard blocks
exactly match those accepted by its local shards.
If any shard rejected its block, the representative does not propose to SMR, allowing the underlying consensus protocol to handle leader replacement through its standard fault tolerance mechanisms.

Once SMR commits a global block, there are two cases.
If the committed block matches the block that the cluster executed speculatively, the representative persists the pre-computed state to disk, creating a checkpoint.
Otherwise, if the committed block differs, the representative triggers rollback to the previous checkpoint, re-executes the committed block, and then persists the resulting state.
This could be the case, for example, if the leader is Byzantine and sends a different proposal to a single follower.
If SMR rejects a block (due to insufficient agreement, timeout, or view change),
the representative coordinates rollback across all shards in the cluster,
restoring state from the checkpoint at the previous block.

The specific choices of SMR protocol and persistence mechanism (logging, snapshots, etc.) are orthogonal to \gen's core design.
Similarly, in practice consensus would operate on compact commitments (e.g., Merkle roots of shard block hashes) rather than full block data, requiring additional construction and verification steps, but we omit this standard optimization for clarity.
Appendix~\ref{appendix:smr_proofs} proves that these mechanisms satisfy the properties stated in the Model:

\begin{restatable}{theorem}{genimpsmr}\label{theorem:gen_implements_smr}
 \gen{} satisfies Agreement, Validity, Liveness, and Causal Liveness.
 \end{restatable}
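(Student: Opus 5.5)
We prove the four properties separately. Safety (Agreement, Validity) comes from the black-box SMR protocol combined with the validity predicate enforced by consensus representatives. Liveness and Causal Liveness come from the SMR progress condition after GST together with Lemma~\ref{lemma:follower_completes_correct_leader_block}.

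\textbf{Agreement} is inherited directly. Clusters commit exactly the global blocks $B^k$ committed by the underlying SMR, and the SMR does not reorder activations inside shard blocks. Every correct cluster therefore commits the same sequence of per-shard execution traces.

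Agreement on traces must then be lifted to agreement on state. We argue by induction on $k$ that all correct clusters reach the same committed state after block $k$. The base case is the common genesis state. For the step, a correct cluster either:
\begin{itemize}
\item finds that its speculative block equals $B^k$ and persists it, or
\item rolls back to the checkpoint of block $k-1$ and re-executes $B^k$.
\end{itemize}
In both cases the resulting state is obtained by applying the same per-shard orders to the same prior state. Determinism of $\mathsf{Execute}$ then gives the same state. The one subtlety is that the resulting state must not depend on the real-time arrival order at followers. It does not: the account states and continuation sets produced are functions of the per-shard orders alone, and the mempool contents at the block boundary are determined as sets.

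\textbf{Validity} uses quorum intersection. A block is committed only if the underlying SMR's quorum includes at least one correct representative, since $f<n/3$. A correct representative votes only for a block whose shard blocks were all accepted by \textsc{IsValidBlock} at its shards. By Lemma~\ref{lemma:block_execution_correctness}, successful verification preserves the system state invariants. Hence the committed block satisfies the validity predicate, and by induction every committed state is valid.

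\textbf{Liveness.} After GST, the progress condition eventually installs a correct leader that stays in office long enough. A correct client resubmits after timeout, and Assumption~\ref{assumption:external_activation} handles deduplication. The correct leader eventually executes the activation and closes a block. By Lemma~\ref{lemma:correct_leader_creates_valid_proposal} that proposal is valid. By Lemma~\ref{lemma:follower_completes_correct_leader_block}, correct followers verify it within $\delta_{\mathit{exec}}+\delta_{\mathit{intra}}$. Because $n-f$ correct clusters accept, SMR commits the block.

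\textbf{Causal Liveness} is the main obstacle. Let $A$ be committed and produce $A'$. By Agreement, $A'$ lies in the committed $mempool$ of its destination shard at every correct cluster, and it survives leader replacement because the new leader starts from the agreed checkpoint. Two things remain to show. First, $A'$ eventually becomes eligible: its $min\_bid$ is at most the parent's $bid$, so it is eligible once the shard reaches that block. Second, a correct leader is not allowed to starve $A'$. I would first try to establish this through the mempool selection rule. The mempool is an ordered list, so $A'$ has only finitely many predecessors, and a correct leader processes eligible entries in FIFO or timestamp order. Along the chain of correct-leader periods, each predecessor eventually executes, so $A'$ executes after finitely many steps. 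Each such execution is then committed by the same argument used for Liveness. If the selection rule in Algorithm~\ref{alg:shard_at_leader} is not FIFO, I would replace this step with a fairness argument on Lamport timestamps: only finitely many eligible activations can have a smaller timestamp than $A'$.
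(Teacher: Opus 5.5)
Your proof follows the paper's in substance. Agreement comes from the underlying SMR's Agreement plus the persist-or-rollback induction using deterministic execution (Lemma~\ref{lemma:state_consistency}). Validity comes from the SMR-enforced predicate; you unpack it through quorum intersection and Lemma~\ref{lemma:block_execution_correctness}, where the paper simply invokes the SMR's Validity property. Liveness and Causal Liveness come from a post-GST correct leader, fair mempool selection, and Lemmas~\ref{lemma:correct_leader_creates_valid_proposal} and~\ref{lemma:follower_completes_correct_leader_block}.

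You should drop the Lamport-timestamp fallback, because it is false.
\begin{itemize}
\item External activations wait in the mempool with $ts=0$ and are stamped only when executed (line~\ref{line:leader_sets_ts_for_act}).
\item $A'$ carries its parent's timestamp, which is at least $1$.
\item A steady stream of client submissions therefore keeps supplying eligible activations with smaller timestamp than $A'$, so a timestamp-priority rule can starve $A'$ forever.
\end{itemize}
For the same reason, the phrase ``FIFO or timestamp order'' in your main line should read just ``FIFO''.

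No fallback is needed. $\mathsf{Select}$ is required to be a policy without indefinite postponement (Algorithm~\ref{alg:util_alg}, line~\ref{line:select}). That is exactly the hypothesis the paper's Lemma~\ref{lemma:mempool_progress} uses, and your FIFO argument is its base-implementation instance.

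Also, because of Causal Compression, $A'$ may already have executed inside block $k$ itself. In that case it is in $executed$ rather than the mempool, and there is nothing left to prove.
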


%% file: perf_tight.tex
\section{Performance and Design Implications}
\label{sec:perf}

The execute-before-agree architecture has three important consequences for
system performance.

\textbf{Cross-Shard Execution.}
Because a cluster's shards are co-located, dependent continuations travel over
the intra-cluster execution path rather than through consensus.
Causal chains can therefore progress at execution speed, with agreement applied
to the resulting execution trace rather than between dependent steps.

\textbf{Leader-Follower Parity.}
Execute-before-agree is useful only if followers can validate execution at a
rate comparable to the leader.
\gen{} enables this by replaying per-shard traces independently and checking
cross-shard dependencies at block boundaries, leaving execution across shards
parallel.

\textbf{Decoupling Execution from Block Timing.}
Because blocks record completed execution rather than prescribe work that must
finish within a block interval, execution need not be synchronized with block
boundaries.
This permits both optimistic responses before finality and activations whose
execution duration need not fit within a block interval.

Further discussion of communication, speculative execution, long-running activations,
and follower verification appears in Appendices~\ref{appendix:additional_perf}
and~\ref{appendix:design_discussions}.

%% file: eval_new.tex
\vspace{-0.5em}

\section{Evaluation}
\label{sec:eval}

We evaluate \gen{} along three dimensions: horizontal execution scaling,
the latency benefit of Causal Compression over deferred execution,
and recovery from speculative execution after leader failure.
We additionally validate the deferred-execution model against NEAR testnet.

Our implementation comprises approximately 460K lines of Rust across execution,
consensus, networking, storage, and runtime components.
Experiments run on AWS Graviton instances in a single region with approximately
1\,ms inter-validator RTT, isolating execution and replication costs rather than
wide-area consensus latency.
External activations and consensus messages are Ed25519-signed;
generated continuations are validated against their source execution.
Agreement operates over a Merkle root of shard-block hashes, keeping the
consensus payload constant-size.

We use YCSB~\cite{cooper2010benchmarking} for scalability and fault-tolerance
benchmarks and a fungible token transfer workload for cross-shard benchmarks,
where each transfer debits the sender and generates a continuation that credits
the receiver.
Each validator dedicates one core to the consensus representative; the remaining
cores host execution shards.

\paragraph{Horizontal scalability.}

We first measure whether throughput scales with available execution parallelism.
With a single validator, \gen{} scales from 16 to 160 execution cores and reaches
approximately $1.07$M requests/s.
Across six measured configurations, per-core throughput remains nearly constant,
closely tracking linear scaling over a $10\times$ increase in compute resources
(Figure~\ref{fig:eval_scaling}a).

We next repeat the experiment with four validators, so proposed blocks are
independently re-executed and verified before agreement.
Scaling each validator from 4 to 48 cores 
raises replicated YCSB throughput from $16{,}200$ to $193{,}000$
committed requests/s: a $11.9\times$ gain over a $12\times$ increase in compute
(Figure~\ref{fig:eval_scaling}b).
The fungible token workload scales $9.3\times$ over the same range, reaching
approximately $65$K committed transfers/s.
Because receivers are chosen uniformly at random, increasing the shard count
also increases the fraction of transfers that cross shards, from $67\%$ at
3 shards to $98\%$ at 47.
Despite becoming increasingly cross-shard, throughput remains close to linear.
Thus, replicated throughput continues to scale with execution parallelism.

\begin{figure}[t!]
    \centering
    \begin{minipage}{0.49\linewidth}
        \centering
        \includegraphics[width=\linewidth]
        {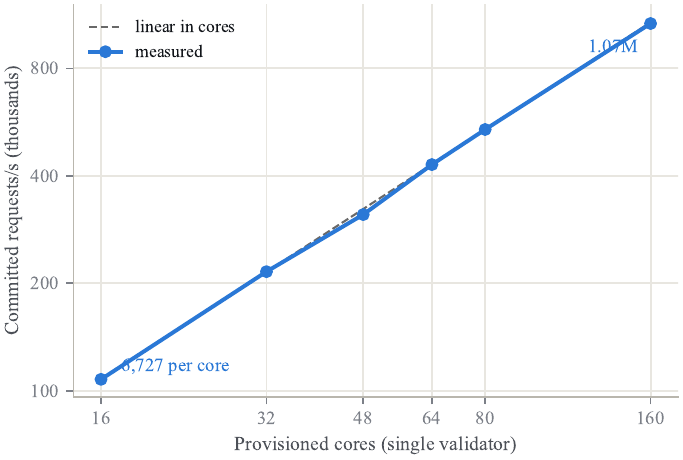}
        \vspace{-1ex}
        \subcaption{Single-validator execution.}
    \end{minipage}
    \hfill
    \begin{minipage}{0.49\linewidth}
        \centering
        \includegraphics[width=\linewidth]
        {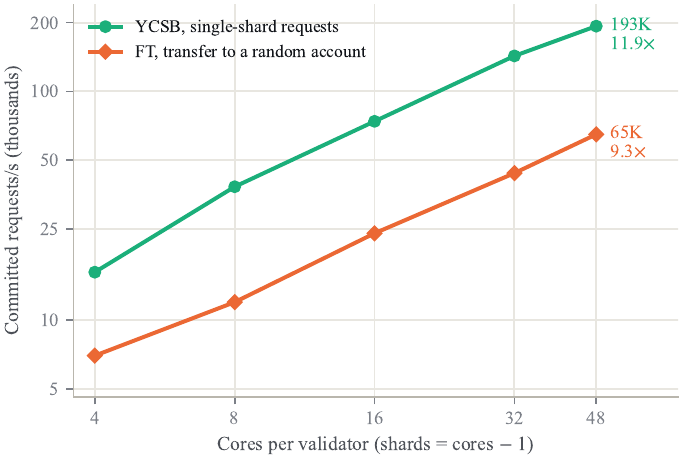}
        \vspace{-1ex}
        \subcaption{Four-validator replication.}
    \end{minipage}
    \caption{Horizontal scalability of \gen{}.
    (a) Single-validator execution.
    (b) Four-validator replication under YCSB and fungible-token workloads.}
    \label{fig:eval_scaling}
\end{figure}

\paragraph{Leader--Follower Parity.}
Execute-before-agree is useful only if followers can verify at the leader's rate.
Across 24 runs of the cross-shard workload at 44--100\% of sustainable load,
the slowest follower shows no persistent block-height lag.
The leader-to-slowest-follower gap has an average fitted growth rate of
$0.030$ blocks/s (standard deviation $0.212$) and decreases in 10 of 24 runs.
Because insufficient follower capacity would appear as steadily increasing lag,
these results indicate that follower validation keeps pace with leader execution.

\paragraph{Causal Compression.}

We evaluate chains of $N$ sequential continuations against an otherwise identical configuration that defers each continuation to the next block.
This isolates the cost of an agreement boundary between dependent steps while keeping the execution engine, workload, network stack, and deployment unchanged.
As shown in Figure~\ref{fig:eval_causal_failure}a, deferred execution adds approximately $103$\,ms per causal step against a $100$\,ms block interval,
 while \gen{} adds only approximately $4$\,ms.
At 16 hops, median latency is $1,822$\,ms deferred versus $222$\,ms under \gen{}, an $8.2\times$ reduction versus deferred execution.

\emph{Real-system validation.}
To verify that this deferred-execution pattern occurs in practice,
 we measured causal chains on NEAR testnet.
With 10 active shards, every measured causal hop across depths 2--16 and 80 runs required one additional agreement round, including cross-shard chains.
TON's basechain operated as a single shard during our measurements, so it could not provide a cross-shard baseline.

\paragraph{Fault tolerance.}
Finally, we test whether speculative execution recovers after leader failure.
We kill the active leader during sustained YCSB load in a four-validator
deployment.
Across three runs at $2{,}000$ requests/s, surviving validators recover after
the configured view-change timeouts and return to their pre-failure throughput,
as shown in Figure~\ref{fig:eval_causal_failure}b.
The transient spike reflects queued requests being processed after recovery.
The experiment validates the recovery path rather than peak-load failover.

\begin{figure}[t]
    \centering
    \begin{minipage}{0.49\linewidth}
        \centering
        \includegraphics[width=\linewidth]
        {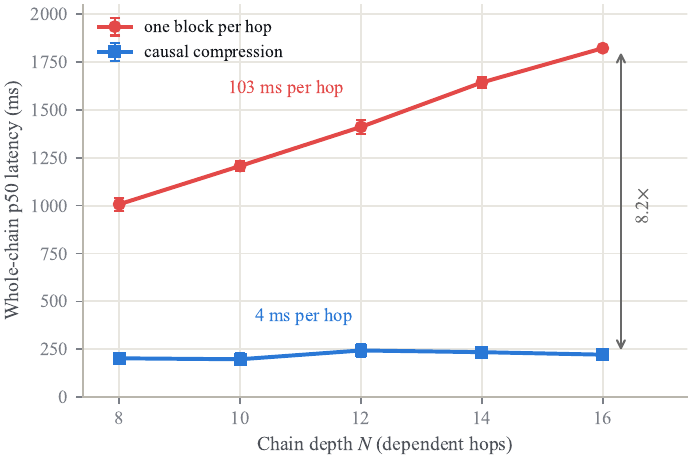}
        \vspace{-1ex}
        \subcaption{Causal Compression.}
    \end{minipage}
    \hfill
    \begin{minipage}{0.49\linewidth}
        \centering
        \includegraphics[width=\linewidth]
        {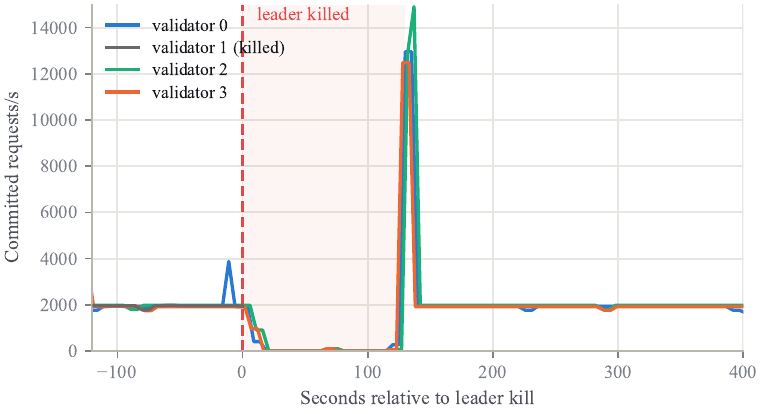}
        \vspace{-1ex}
        \subcaption{Leader-failure recovery.}
    \end{minipage}
    \caption{
        (a) Causal Compression avoids per-hop block delay.
        (b) Validators recover after leader failure and resume pre-failure throughput.
        Error bars in (a) show one standard deviation over three runs.
    }
    \label{fig:eval_causal_failure}
\end{figure}

%% file: related_tight.tex
\vspace{-0.5em}
\section{Related Work}

\textbf{Blockchain Scalability Approaches.}
Modern L1 blockchains scale either vertically, through parallel execution on a single chain, or horizontally, through sharding.
Aptos~\cite{aptos}, Sui~\cite{sui}, and Solana~\cite{solana} follow the former approach, but conflicting or shared-state transactions still require coordination.
NEAR~\cite{near}, TON~\cite{ton}, Dfinity/ICP~\cite{icp}, QuarkChain~\cite{quarkchain}, and Monoxide~\cite{monoxide} shard state and execution.
\gen{}, like NEAR and TON, retains a single consensus point, but lets dependent cross-shard work progress without the per-step agreement
boundaries observed in NEAR's receipt-based execution.

\textbf{Cross-shard semantics.}
Prior sharded systems either restrict cross-shard transactions to operations known in advance~\cite{rapidchain}
or execute data-dependent cross-shard work through successive asynchronous receipts~\cite{near,monoxide,quarkchain}.
\gen{} instead localizes atomicity to an activation and lets applications introduce stronger coordination only where needed.
\gen{} and NEAR provide asynchronous composability in the sense used by CIRC~\cite{espresso-circ};
CIRC achieves this across rollups through a centralized coordinator, whereas \gen{} provides the property within a Byzantine-replicated sharded L1.
This is related to broader work on weakening system-wide ACID semantics in partitioned systems~\cite{cure,cops,calvin,occult},
but \gen{} additionally guarantees Causal Liveness for internally generated continuations.

\textbf{Execute-before-agree.}
\gen{} builds on prior execute-before-agree systems including Speculative Paxos~\cite{speculative-paxos}, NOPaxos~\cite{nopaxos}, Hyperledger Fabric~\cite{fabric}, and Rex~\cite{guo2014rex}.
Unlike these systems, \gen{} must validate Byzantine-proposed traces in which execution dynamically generates further cross-shard operations,
while allowing follower shards to verify independently and defer cross-shard consistency checks to the block boundary.

%% file: appendix/appendix_algs.tex
\section{Protocol Pseudocode}
\label{appendix:algorithms}

This appendix collects the pseudocode of the \gen{} shard protocols described in
Section~\ref{sec:architecture}: the shared data structures and common logic
(Algorithm~\ref{alg:util_alg}), the leader shard protocol
(Algorithm~\ref{alg:shard_at_leader}), and the follower shard protocol
(Algorithms~\ref{alg:shard_at_follower}
and~\ref{alg:shard_at_follower_verification}).
Line numbers referenced from the body of the paper refer to these algorithms.

\input{algs/util_alg.tex}

\input{algs/cluster_leader_alg.tex}

\input{algs/cluster_alg.tex}

%% file: algs/util_alg.tex
\begin{algorithm}[H]
	\caption{\gen{}: Data Structures and Common Logic} 
    \label{alg:util_alg}
	\begin{algorithmic}[1]

        \Statex \textbf{define struct} Activation:
        \begin{algorithmic}[1]
            \Statex $id$: integer \Comment{activation id}
            \Statex $logic$: function \Comment{smart contract function}
            \Statex $ts$: integer \Comment{Lamport timestamp (0 for external)} %
            \Statex $sid$: integer %
            \Statex $min\_bid$: integer \Comment{minimum block id (0 for external)} %
            \Statex $bid$: integer \Comment{block id containing this activation} %
            \Statex $is\_external$: boolean %
            \Statex $signature$: bytes \Comment{client signature (external only)}
        \end{algorithmic}

        \Statex \textbf{define struct} ShardBlock: 
        \begin{algorithmic}[1]
            \Statex $bid$: integer 
            \Statex $sid$: integer 
            \Statex $data$: List$\langle$Activation$\rangle$  \Comment{activations executed by this shard}
        \end{algorithmic}

        \Statex \textbf{define struct} GlobalBlock: 
        \begin{algorithmic}[1]
            \Statex $bid$: integer
            \Statex $blocks$:  List[m] of ShardBlock \Comment{one per shard}
        \end{algorithmic}
        \Statex

        \Function{Execute}{$activation$}
            \State $(writes, continuations) \gets$ activation.logic()
            \State ProcessWrites($writes$) \Comment{update shard's account state}
            \For{$cont \in continuations$}
                \State $cont.ts \gets activation.ts$ \label{line:set_ts_for_cont}
                \State $cont.min\_bid \gets activation.bid$ \label{line:set_min_bid_for_cont}
            \EndFor
            \State \Return continuations
        \EndFunction
        \Statex

        \Function{Select}{$activations$}
            \State \Return activation with highest priority \Comment{any policy without indefinite postponement; FIFO in base implementation}\label{line:select}
        \EndFunction

	\end{algorithmic} 
\end{algorithm}

\newcounter{tempALGline}
\setcounter{tempALGline}{\value{ALG@line}} %

%% file: algs/cluster_leader_alg.tex
\begin{algorithm}[H]
	\caption{\gen{}: Leader Shard $i$ Protocol} 
    \label{alg:shard_at_leader}
	\begin{algorithmic}[1]

        \setcounter{ALG@line}{\value{tempALGline}} %

        \Statex \textbf{Local variables}:
        \Statex $bid$: integer, initialized to 0
        \Statex $ts$: integer, initialized to 0
        \Statex $mempool$: list⟨Activation⟩, initialized to []

        \Statex $executed$: list$\langle$Activation$\rangle$, initialized to [] 
        \Statex $order\_proposal$: list$\langle$Activation$\rangle$, initialized to [] %
        \Statex

        \Loop
        \State wait until $\exists a \in mempool$ such that $a.min\_bid \leq bid$
        \State \Call{ApplyActivation}{} \label{line:apply_activation_at_leader}
        \EndLoop
        \Statex

        \Upon{clock tick} 
        \State \Call{CloseBlock}{} \Comment{Does not interrupt ApplyActivation}
        \EndUpon
        \Statex

        \Upon{receipt of continuation \emph{cont}}
            \State $mempool$.append(\emph{cont}) \label{line:add_cont_to_pending_activations}
        \EndUpon
        \Statex

        \Procedure{ApplyActivation}{}

            \State $filtered \gets [\,a \in mempool \mid a.min\_bid \leq bid\,]$~\label{line:filter_pending_activations}
            \State $\emph{act} \gets$ \Call{Select}{$filtered$}
            \State $mempool.\text{remove}(\emph{act})$~\label{line:remove_act_from_pending_activations}

            \If{\emph{act}.is\_external and \emph{act}.signature is invalid}~\label{line:leader_checks_external_activation}

                \State return
            \EndIf

            \State \emph{act}.ts $\gets \max(ts, \emph{act}.ts)+1$ \Comment{Lamport ts calculation} \label{line:leader_sets_ts_for_act}  %
            \State $ts \gets \emph{act}.ts$ \label{line:leader_updates_ts}

            \State \emph{act}.bid $\gets bid$ \label{line:set_bid_for_act}
            \State $executed$.append(\emph{act}) \label{line:add_act_to_order_proposal}
            \State $order\_proposal$.append(\emph{act})

            \State continuations  $\leftarrow$  \Call{Execute}{$\emph{act}$} \label{line:execute_act_at_leader}

            \ForAll{$cont \in continuations$}
                \State send $cont$ to destination shard $cont.shard$ \label{line:send_cont_to_shard}
            \EndForAll
       \EndProcedure
        \Statex

        \Procedure{CloseBlock}{}\label{line:leader_closes_block}

            \State send $\langle{order\_proposal, bid }\rangle$ to shard $i$ at all followers \label{line:leader_sends_block_to_followers} %
                
            \State $block \gets$ ShardBlock($bid, i, order\_proposal$) \label{line:leader_creates_block}
            \State send $block$ to leader consensus representative %
            \State $bid \gets bid + 1$ \label{line:increment_bid_at_leader} 
            \State $order\_proposal \gets []$ \label{line:leader_clears_order_proposal}

        \EndProcedure

	\end{algorithmic} 
\end{algorithm}
\setcounter{tempALGline}{\value{ALG@line}} %

%% file: algs/cluster_alg.tex
\begin{algorithm}[H]
	\caption{\gen{}: Follower Shard $i$ Protocol}
    \label{alg:shard_at_follower}
	\begin{algorithmic}[1]
        \setcounter{ALG@line}{\value{tempALGline}} %

        \Statex \textbf{Local variables}:
            \Statex $bid$: integer, initialized to 0
            \Statex $ts$: integer, initialized to 0

            \Statex $mempool$: list$\langle$Activation$\rangle$, initialized to [] 
            \Statex $executed$: list$\langle$Activation$\rangle$, initialized to [] 
            \Statex $pending\_activations$: list$\langle$Activation$\rangle$, initialized to []
            \Statex $speculated$: list$\langle$Activation$\rangle$, initialized to []

            \Statex $ts\_at\_execution$: map$\langle$Activation, integer$\rangle$, initialized to $\emptyset$
        \Statex

        \Upon{receipt of continuation \emph{cont}}
            \State $pending\_activations$.append(\emph{cont}) \label{line:add_cont_to_pending_activations_follower}

        \EndUpon
        \Statex

        \Upon{receipt of $\langle{order\_proposal}, k \rangle$ from leader shard $i$}
            \If{$k \neq bid$}
                \State buffer block until $bid = k$ \Comment{Execute block proposals sequentially}
            \EndIf
            \State $is\_valid \gets$ \Call{IsValidBlock}{$order\_proposal$}
            \If{$is\_valid$}

            \State $block \gets$ ShardBlock($bid, i, order\_proposal$)
            \State send $block$ to consensus representative \label{line:follower_sends_block_to_consensus_representative} %
            \State $bid \gets bid + 1$
            \State $executed \gets executed + speculated$ \label{line:add_speculated_to_executed}
            \State $mempool \gets mempool + pending\_activations$\label{line:combine_mempool_and_pending_activations} 

            \Else
                \State send $\langle$\texttt{invalid}, $block\rangle$ to consensus representative \label{line:follower_sends_invalid_block_to_consensus_representative}

            \EndIf
            \State $pending\_activations \gets []$
            \State $speculated \gets []$ \label{line:end_of_block_proposal_execution}
            \State $ts\_at\_execution$ $\gets \emptyset$ \label{line:clear_ts_at_execution_at_follower}
        \EndUpon

	\end{algorithmic} 
\end{algorithm}
\setcounter{tempALGline}{\value{ALG@line}} %

\begin{algorithm}[H]
	\caption{\gen{}: Follower Shard $i$ Protocol (continued)}
    \label{alg:shard_at_follower_verification}
	\begin{algorithmic}[1]

        \setcounter{ALG@line}{\value{tempALGline}} %

        \Function{IsValidBlock}{$order\_proposal$}

            \State $start\_time \gets$ current time \label{line:set_start_time_for_phase_1}
            \Statex \textbf{Phase 1: Re-execution}

            \For{$\emph{act} \in order\_proposal$}

            \If{$\emph{act}.shard \neq i$ or $\emph{act}.min\_bid > bid$} \label{line:return_false_if_act_not_valid}
                \State \Return false \label{line:return_false_if_act_not_valid_at_follower_shard_or_bid}
            \EndIf
            \If{\emph{act}.is\_external}
                \State $ts \gets ts + 1$ \label{line:increment_ts_at_follower_for_external_act}
                \If{$\emph{act}.ts \neq ts$ or $\emph{act}$ is not properly signed or $\emph{act}$ is in $executed_i$ or \emph{act} is in $speculated_i$} \label{line:return_false_if_act_not_valid_at_follower_signed_external}
                    \State \Return false \label{line:return_false_if_act_not_valid_at_follower_signed_external_or_in_executed_or_speculated}
                \EndIf
            \Else \Comment{this is a continuation}
                \State $ts\_at\_execution$[\emph{act}] $\gets ts$ \label{line:set_ts_at_execution_for_act}
                \State $ts \gets \emph{act}.ts$ \label{line:set_ts_for_cont_at_follower}
            \EndIf

            \State $speculated$.append(\emph{act}) \label{line:add_act_to_speculated_at_follower}
            \State continuations  $\leftarrow$  \Call{Execute}{$\emph{act}$} \label{line:execute_act_at_follower}

            \ForAll{$cont \in continuations$}
                \State send $cont$ to destination shard $cont.shard$ \label{line:send_cont_to_shard_at_follower}
            \EndForAll

        \EndFor

        \Statex \textbf{Phase 2: Verification}
        \State $deadline \gets start\_time + \delta_{exec} + \delta_{intra}$
        \State wait until current time $\geq$ deadline \label{line:wait_for_deadline_at_follower}

        \State \Return \Call{IsCausallyConsistent}{$speculated$}

        \EndFunction
        \Statex

        \Function{IsCausallyConsistent}{$speculated$}

            \ForAll{$cont \in speculated$} 

                \If{not $cont.is\_external$} \Comment{Reported continuations need to be verified against internal computation}
                \State $in\_cont \gets \bot$

                    \If{$cont.id \in pending\_activations$}

                    \State $in\_cont \gets pending\_activations$.remove($cont.id$)\label{line:remove_cont_from_pending_activations} 
                    \ElsIf{$cont.id \in mempool$}
                        \State $in\_cont \gets mempool$.remove($cont.id$) \label{line:remove_cont_from_mempool}
                    \EndIf

                    \If{not $in\_cont$ or $cont.ts \neq max(in\_cont.ts, ts\_at\_execution[in\_cont])+1$}
                        \State \Return false \label{line:return_false_if_cont_not_found_or_ts_not_correct}
                    \EndIf
                \EndIf

            \EndForAll

            \State \Return true
        \EndFunction

	\end{algorithmic} 
\end{algorithm}

%% file: appendix/appendix.tex
\section{Correctness Proofs}
\subsection{Leader Protocol Proofs}
\label{appendix:proofs}

This appendix contains proofs of lemmas and theorems from the main text, which were omitted due to space constraints.

\statevaliditytheorem*
\begin{proof}
    Our execution model ensures any concurrent execution is equivalent to a sequential one, ordered by lock acquisition.
    We prove correctness by induction on the number of execution steps.
    The base case is when the system is empty, which is trivially valid.
    Assume that the system state is valid after $n$ steps, and consider the next step.

    \textbf{Adding an activation to the mempool:}
    Consider a step where an activation is added to some shard's mempool.
    Without loss of generality, let this be $mempool_i$.
    Activations are added to the $mempool_i$ either externally,
    in which case they satisfy Assumption~\ref{assumption:external_activation},
    or as a continuation \emph{cont} at line~\ref{line:add_cont_to_pending_activations}.
    In the latter case, \emph{cont} is sent by the leader at line~\ref{line:send_cont_to_shard} after executing activation \emph{act} at shard $j$.
    By line~\ref{line:execute_act_at_leader} \emph{cont} was generated according to the activation logic.
    Then, at line~\ref{line:add_act_to_order_proposal}, \emph{act} was added to $executed_j$ before sending the continuation to shard $i$, preserving Integrity.
    By the code, and by the induction hypothesis on \emph{act} satisfying No Duplication, this addition only occurs once and at the correct shard, preserving No Duplication and Well-formedness.
    The other properties are unaffected by adding an activation to the mempool.

    \textbf{Applying an activation:}
    
    Consider a step where some shard executes an activation.
    Without loss of generality, let this be shard $i$ executing activation \emph{act}.

    For \textit{Authenticity}, the leader only executes an external activation and adds it to $executed_i$ after verifying its signature is valid (line~\ref{line:leader_checks_external_activation}).

    Since \emph{act} is taken from $mempool_i$ (line~\ref{line:filter_pending_activations}), it was added to $mempool_i$ in some earlier step.
    By the induction hypothesis, all invariants held after that earlier step, including:
    
    \textit{Integrity:} If \emph{act} is a continuation, it was correctly generated by executing some previous activation in $executed_j$ (for some shard $j$).
    Moving \emph{act} from $mempool_i$ to $executed_i$ preserves this property-- the same activation that was correctly generated remains correctly generated.
    
    \textit{No Duplication:} The activation \emph{act} appeared at most once in $mempool_i \cup executed_i$.
    Since \emph{act} is removed from $mempool_i$ (line~\ref{line:remove_act_from_pending_activations}) before being added to $executed_i$ (line~\ref{line:add_act_to_order_proposal}), \emph{act} still appears at most once.
    
    \textit{Well-formedness:} The activation \emph{act} is included in shard $i$ when added to $mempool_i$.
    Since it remains on the same shard throughout execution, it is added to the correct $executed_i$.

    Finally, we prove the \textit{Happens-before Relation} is preserved by our Lamport timestamps implementation.
    Let $A$ and $A'$ be two activations in $executed_i$ and $executed_j$ respectively such that $A \prec A'$.
    We consider the cases defining $A \prec A'$.

    If $i = j$, then $A$ occurs before $A'$ in $executed_i$.
    Let $A.ts$ be the timestamp assigned to $A$ at line~\ref{line:leader_sets_ts_for_act}.
    Within each shard, the local timestamp $ts$ is incremented monotonically with every activation execution (line~\ref{line:leader_updates_ts}).
    Thus, when $A'$ is executed, its timestamp is set to at least $A.ts + 1$ (line~\ref{line:leader_sets_ts_for_act}), ensuring $A.ts < A'.ts$.

    If $i \neq j$ and $A$ directly triggered $A'$,
    then when $A'$ is generated as a continuation of $A$, it inherits $A$'s timestamp: $A'.ts = A.ts$ (line~\ref{line:set_ts_for_cont}).
    When $A'$ is later executed on shard $j$, its timestamp is updated to at least $A.ts + 1$ (line~\ref{line:leader_sets_ts_for_act}), ensuring $A.ts < A'.ts$.

    If $A$ did not directly trigger $A'$, there exists a causal chain of activations from $A$ to $A'$.
    Let $A''$ be the activation that directly triggered $A'$ ($A'' \rightarrow A' $ and $A \prec \ldots \prec A'' \prec  A'$.).
    By the arguments above, $A''.ts < A'.ts$.
    By the induction hypothesis applied to the earlier step when $A''$ was added to $executed_j$, we have $A.ts < A''.ts$.
    By transitivity, $A.ts < A'.ts$.

\end{proof}

\subsection{Blocks and Replication Proofs}
\label{appendix:blocks_and_replication_proofs}

\blockcausality*
\begin{proof}
Causal relations within a shard are preserved naturally by the sequential execution order: if $A$ and $A'$ are both in shard $i$ and $A\prec A'$,
then $A$ is executed before $A'$, so $A.bid \leq A'.bid$.

For cross-shard causality, when activation $A$ in shard $i$ directly triggers continuation $A'$ in shard $j$ (i.e., $A \rightarrow A'$), we set $A'.min\_bid = A.bid$ (line~\ref{line:set_min_bid_for_cont}).
The leader's selection of activations respects this constraint by filtering for activations where $activation.min\_bid \leq bid$ (line~\ref{line:filter_pending_activations}).
Therefore, $A'$ can only be included in a block with $bid \geq A.bid$, ensuring $A.bid \leq A'.bid$.

For transitive causality where $A \prec A'$ through a chain of activations,
the result follows by induction on the length of the causal chain, using the two base cases above.
\end{proof}

Having established that the leader maintains Block Causality, we can now prove that the global proposals generated by the leader are valid.

\correctleadercreatesvalidproposal*
\begin{proof}
By Theorem~\ref{theorem:validity_of_system_state}, a correct leader maintains a correct system state after every execution step.
When all shards execute CloseBlock for block $k$, the resulting global proposal consists of all activations added to $executed_i$ across all shards $i$ since block $k-1$ was closed.

The Authenticity, No Duplication, and Well-formedness properties are maintained within each shard independently and do not involve cross-shard dependencies. Since every prefix of a correct leader's execution maintains these properties (by Theorem~\ref{theorem:validity_of_system_state}), they hold for the activations in each shard proposal.

The Integrity and Happens-before Relation properties require that causal dependencies are preserved in the global proposal.
That is, if $A\prec A'$ and $A'$ is included in the global proposal, then $A$ must either already be in $executed_j$ (for some shard $j$) before $A'$ is executed,
or $A$ must also be included in the global proposal.
This is guaranteed by Lemma~\ref{lemma:leader_maintains_block_causality} (Block Causality) and the fact that $bid$ increments monotonically: if $A.bid \leq A'.bid$ and both are in blocks $\leq k$, then either $A.bid < A'.bid$ (so $A$ was in an earlier block) or $A.bid = A'.bid$ (so both are in the same global proposal).

Therefore, applying the global proposal to the system state at the end of block $k-1$ results in a correct system state at the end of block $k$.
\end{proof}

\begin{remark}
Note that Authenticity, No Duplication, and Well-formedness are maintained within each 
shard independently and do not involve cross-shard dependencies. 
This means that a shard proposal can be marked as invalid with regard to these properties
 during follower verification based solely on the local execution history. 
 We rely on this fact in subsequent proofs.
\end{remark}

\subsection{Follower Proofs}
\label{appendix:follower_proofs}

\subsubsection*{Follower Protocol Correctness}

We now establish the correctness of the follower protocol.
We first prove that execution is deterministic (Lemma~\ref{lemma:deterministic_execution}),
which ensures followers can reliably verify leader proposals through re-execution.
We then prove that successful block execution maintains system state invariants (Lemma~\ref{lemma:block_execution_correctness}).
Finally, we prove that when the leader is correct and timing assumptions hold, followers successfully complete block verification (Lemma~\ref{lemma:follower_completes_correct_leader_block}).

\paragraph{Execution Determinism}

\begin{lemma}
\label{lemma:deterministic_execution}
Given the same initial state and the same shard proposal, executing the activations sequentially in the order they appear produces the same final state.
\end{lemma}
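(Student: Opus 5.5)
The plan is an induction on the length of the shard proposal $B_i^k=\langle a_1,\ldots,a_\ell\rangle$, using only the determinism assumption from Section~\ref{sec:model}: $\mathsf{Execute}(s,A)=(s',C)$ is a function of the pair $(s,A)$. First I will make precise what ``state'' means for a single shard during sequential re-execution. It is the tuple consisting of the account state of the accounts assigned to shard $i$, the local Lamport timestamp $ts$, the local block identifier $bid$, and the lists $executed_i$, $mempool_i$ (or, at a follower, the shadow structures $speculated_i$ and $pending\_activations_i$). The emitted continuations are treated as part of the output. I will then define the one-step transition $T(\Sigma,a)$ as exactly the effect of processing activation $a$ in Algorithm~\ref{alg:shard_at_leader} or Phase~1 of Algorithm~\ref{alg:shard_at_follower_verification}. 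I must check that every component of $T$ depends only on $(\Sigma,a)$.

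The key step is to verify this for each line of the per-activation body.
\begin{itemize}
\item The Lamport update $\max(ts,a.ts)+1$ and the follower update $ts\gets a.ts$ or $ts\gets ts+1$ are arithmetic on values in $\Sigma$ and $a$.
\item Setting $a.bid\gets bid$ and appending to $executed$ or $speculated$ are deterministic list operations.
\item The call \textsc{Execute}$(a)$ invokes $a.logic$ on the target account's state, which is contained in $\Sigma$. By the model's determinism it yields the same writes and the same continuation sequence $C$.
\item The metadata assignments on lines~\ref{line:set_ts_for_cont} and~\ref{line:set_min_bid_for_cont} copy fields of $a$.
\end{itemize}
Hence $T(\Sigma,a)$ produces a unique successor state and a unique ordered list of outgoing continuations.

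The induction itself is then immediate. Take two executions starting from the same $\Sigma_0$ and processing the same proposal. The base case $\ell=0$ is trivial. If both executions agree on $\Sigma_{j}$, then applying $T$ to the same $a_{j+1}$ gives the same $\Sigma_{j+1}$ and the same emitted continuations. After $\ell$ steps the final states coincide, and so do the multisets, indeed the sequences, of outgoing continuations.

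The main obstacle is concurrency. Continuations arriving from other shards are appended to $mempool_i$ or $pending\_activations_i$ by a lock-free step, and the arrival order is nondeterministic. I will handle this by observing two facts. First, the order of processed activations is fixed by the proposal, not by \textsc{Select}, so arrival order never influences which activation runs next. Second, incoming insertions never read or modify account state, $ts$, or $executed$/$speculated$. Therefore the lemma's ``final state'' should be read as the state restricted to these components, together with the emitted continuations. The incoming-continuation buffer is reconciled separately in Phase~2, and \textsc{IsCausallyConsistent} uses lookup by identifier, which is insensitive to insertion order. I expect stating this restriction cleanly, so that it is strong enough for the later Lemma~\ref{lemma:block_execution_correctness}, to be the delicate part rather than the induction itself.
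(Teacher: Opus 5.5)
Your proposal is correct and matches the paper's argument: induction on the number of activations in the shard proposal, with each step fixed by the determinism of \textsc{Execute} on the pair (state, activation). Restricting ``final state'' to the components untouched by nondeterministic continuation arrivals is a sound refinement that the paper's two-line proof leaves implicit; note that the lemma's actual consumers are Lemmas~\ref{lemma:state_consistency} and~\ref{lemma:causal_liveness}, where the merged mempool would then have to be compared as a multiset, and not Lemma~\ref{lemma:block_execution_correctness}, which never invokes determinism.
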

\begin{proof}
Each activation's execution is deterministic: given the same state and activation, the execute function produces the same writes and continuations.
By induction on the number of activations in the shard proposal, executing them sequentially in order produces the same final state.
\end{proof}

\paragraph{Block Execution Safety}

The previous lemma establishes that if a follower successfully executes a proposal, it reaches the same state as the leader (by deterministic execution from the same initial state).
However, it does not address whether the proposal itself is valid—that is, whether it maintains system state invariants.

A key challenge in our sharded architecture is handling cross-shard causality during parallel execution.
Consider activations $A$ and $A'$ on different shards where $A \rightarrow A'$.
During follower execution, different shards process their shard proposals in parallel:
$A'$ may execute on its shard before $A$ completes on its shard, creating temporary ``inconsistencies'' in the system state.

Our approach is to tolerate these transient inconsistencies during execution, but enforce that all invariants hold at block boundaries.
By checking system state correctness only after all shards complete their execution (rather than coordinating during execution),
we enable shards to execute independently in parallel, maximizing throughput.
This block-boundary verification strategy is central to \gen's scalability:
 shards execute at full speed without inter-shard coordination, synchronizing only at block boundaries.
The use of shadow data structures and atomic application of blocks ensures that these transient 
 inconsistencies are never visible to the external clients.

With this approach established, we must prove that the verification procedure correctly identifies invalid proposals.
Specifically, we show: if IsValidBlock returns true for all shards, then applying the global proposal maintains all system state invariants.
For the proof, we observe that our invariants partition into two categories.
Shard-local properties (Authenticity, No Duplication, Well-formedness) can be verified by examining only a single shard's execution history.
Cross-shard properties (Integrity, Happens-before Relation) require checking that causal dependencies between shards are properly maintained.

We introduce terminology for block execution outcomes.
A shard $i$ \emph{completes block $k$ successfully} if it executes and verifies the shard block $k$ proposal, then approves it for consensus (line~\ref{line:follower_sends_block_to_consensus_representative}).
Otherwise, if verification fails and the shard sends an invalid block message (line~\ref{line:follower_sends_invalid_block_to_consensus_representative}), the shard has \emph{failed block $k$}.

We now prove that successful block execution at all shards maintains system state invariants,
beginning with auxiliary lemmas about how follower state evolves during and after block execution.

We begin by proving that continuations in $pending\_activations$ have valid sources in $executed$.

\begin{lemma}
    \label{lemma:pending_activations_implies_src_in_executed}
    Let $b$ be the block number during which activation $A$ is added to $pending\_activations_i$.
    If block $b$ completes successfully at all shards, then there exists another activation $B$ such that $B\rightarrow A$,
    and $B$ is added to $executed_j$ by the end of block $b$.
    Moreover, each instance of $A$ in $pending\_activations_i$ corresponds to a distinct instance of $B$ in $executed_j$.
\end{lemma}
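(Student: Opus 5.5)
The plan is to trace where entries of $pending\_activations_i$ come from in the follower protocol. By inspection of Algorithm~\ref{alg:shard_at_follower}, the only place an activation is appended to $pending\_activations_i$ is line~\ref{line:add_cont_to_pending_activations_follower}, upon receipt of a continuation. Within a correct cluster, shards trust one another, and the only code that sends continuations is line~\ref{line:send_cont_to_shard_at_follower} in Phase~1 of \textsc{IsValidBlock} at some shard $j$. So each arrival of $A$ corresponds to a send event at shard $j$. That send is executed immediately after \textsc{Execute}$(B)$ for some activation $B$ with $B \rightarrow A$, and \textsc{Execute} is deterministic and produces $A$ among its continuations. Moreover, $B$ was appended to $speculated_j$ at line~\ref{line:add_act_to_speculated_at_follower} just before execution.

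Next I would pin down the block index. Since $pending\_activations_i$ is cleared at the end of each block's processing (line~\ref{line:end_of_block_proposal_execution}), and $A$ is added during block $b$, the sending execution of $B$ happened while shard $j$ was processing its proposal for block $b$. The argument I would use is that followers process blocks sequentially and the buffer-until-$bid=k$ rule keeps block indices aligned across shards. I would also assume that continuations are tagged or delivered per block, so that the timing bound $\delta_{\mathit{exec}}+\delta_{\mathit{intra}}$ makes the arrival fall within the same block window.

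Given that $B \in speculated_j$ during block $b$, the hypothesis that block $b$ completes successfully at all shards, in particular at $j$, gives that line~\ref{line:add_speculated_to_executed} runs at shard $j$. That line merges $speculated_j$ into $executed_j$, so $B \in executed_j$ by the end of block $b$.

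For the "moreover" part, I would build an injection from instances of $A$ in $pending\_activations_i$ to instances of $B$ in $executed_j$. Each appended instance of $A$ is mapped to the particular send event that caused it, and hence to the particular loop iteration of Phase~1 that executed that instance of $B$. Each such iteration appends exactly one instance of $B$ to $speculated_j$. Each execution sends each generated continuation once, and $A$'s identifier cryptographically binds its contents, so distinct arrivals of $A$ come from distinct executions. This gives distinct instances of $B$. The merge then carries these instances into $executed_j$ in one-to-one fashion.

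The main obstacle I expect is the block-alignment step: ruling out that an instance of $A$ in $pending\_activations_i$ for block $b$ was sent from a $B$ executed under a different, possibly failed or rolled-back, block at shard $j$. Rollback discards $speculated_j$, so such a $B$ would never reach $executed_j$. My first attempt would be to argue from sequential block processing, per-block clearing of $pending\_activations$ and $speculated$, and checkpoint restoration. I would fall back on assuming that intra-cluster continuation messages are associated with the sender's current $bid$, which is the assumption I would state explicitly if needed.
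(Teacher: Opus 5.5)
Your tracing of each entry of $pending\_activations_i$ back to a send at line~\ref{line:send_cont_to_shard_at_follower} after \textsc{Execute}$(B)$ at some shard $j$ agrees with the paper. So does your injection for the ``moreover'' clause, although the appeal to cryptographic binding of identifiers is not needed there.

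The gap is the block-alignment step: nothing you cite shows that $B$ was executed while shard $j$ was processing block $b$.
\begin{itemize}
\item Clearing $pending\_activations_i$ is local to shard $i$.
\item The rule that buffers a proposal until $bid=k$ only serializes blocks at a single shard; follower shards of a cluster otherwise move through blocks independently.
\item You cannot invoke $\delta_{\mathit{exec}}+\delta_{\mathit{intra}}$ either. This is a safety lemma: it feeds Lemma~\ref{lemma:block_execution_correctness}, which assumes nothing about GST. Even after GST these bounds do not synchronize block boundaries across shards.
\end{itemize}
In fact the same-block claim is false in general. Suppose shard $i$'s proposal for block $b-1$ depends on nothing that shard $j$ generates in block $b-1$. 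Then shard $i$ can finish block $b-1$ while shard $j$ is still processing it, and a $B$ executed by $j$ in block $b-1$ delivers its $A$ into $pending\_activations_i$ during block $b$. Your third paragraph, which merges $speculated_j$ at the end of block $b$, therefore covers only the case $b'=b$. Your fallback of tagging messages with the sender's $bid$ is an assumption outside the model.

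The paper does not try to pin down $b'=b$. The per-block tag you wanted to assume is already in the protocol. \textsc{Execute} sets the continuation's $min\_bid$ to $B$'s block id (line~\ref{line:set_min_bid_for_cont}), and the paper uses the Phase~1 check $min\_bid\le bid$ (line~\ref{line:return_false_if_act_not_valid}) to conclude that $B$ was executed in some block $b'\le b$. The proof then splits on the outcome of block $b'$ at shard $j$:
\begin{itemize}
\item If block $b'$ failed, the cluster rolls back. Since blocks are processed sequentially, block $b$ cannot then complete with this instance of $A$, contradicting the hypothesis.
\item Otherwise $B$ was merged into $executed_j$ at the end of $b'$ (line~\ref{line:add_speculated_to_executed}). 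Since $executed_j$ is append-only, $B$ is still there by the end of block $b$.
\end{itemize}
This also disposes of your rollback concern without extra assumptions.

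Your worry about a sender that is \emph{ahead} of shard $i$ is not idle. The $min\_bid$ check constrains $A$ only once $A$ is itself processed in a proposal at shard $i$, which is the situation the paper's argument tacitly has in mind. But that does not rescue the same-block claim, which already fails in the lagging case above.
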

\begin{proof}
    If $A$ is found in $pending\_activations_i$ it means by the code that it was added to $pending\_activations_i$ at line~\ref{line:add_cont_to_pending_activations_follower}, after being sent by some shard $j$ at Line~\ref{line:send_cont_to_shard_at_follower}.
    The sending event of $A$ happens after the execution of the activation $B$ at line~\ref{line:execute_act_at_follower}, and after it was added to $speculated_j$ at line~\ref{line:add_act_to_speculated_at_follower}
    during some block processing.
    This execution is done according to the logic ($B\rightarrow A$) and creates all follow-up continuations, including $A$, with min block id that is $B$'s block id (line~\ref{line:set_min_bid_for_cont}).
    Thus, since $A$ previously passed the check at the beginning of block $b$ execution in line~\ref{line:return_false_if_act_not_valid},
     it is guaranteed that $B$ has been executed in block $b' \leq b$.
    When shard $j$ finishes executing block $b'$, it appends all activations in $speculated_j$ (including $B$) to $executed_j$ at
     line~\ref{line:add_speculated_to_executed}.
    If IsValidBlock returned false earlier, block $b'$ fails at shard $j$.
    In the first case, since $b'\leq b$ and $executed_j$ is append-only, $B$ is guaranteed to be added to $executed_j$ by the end of block $b$.
    Otherwise, since blocks are processed sequentially, block $b$ that includes $A$ does not complete.
    Note that by the code, the instance of $A$ in $pending\_activations_i$ corresponds to a distinct instance of $B$ in $speculated_j$ as after $B$'s execution it is only created once, and only added once to the pending set.
    
\end{proof}

The next lemma extends this result to continuations that have been promoted from $pending\_activations$ to $mempool$.

\begin{lemma}
    \label{lemma:mempool_implies_src_in_executed}
    Let $b$ be the block number when activation $A$ is added to $mempool_i$.
    If block $b$ completes successfully at all shards, and $A$ is a continuation, then there exists another activation $B$ such that $B\rightarrow A$,
    and $B$ is added to $executed_j$ by the end of block $b$.
    Moreover, each instance of $A$ in $executed_i$ corresponds to a distinct instance of $B$ in $executed_j$.
\end{lemma}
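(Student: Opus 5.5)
The plan is to reduce the statement to Lemma~\ref{lemma:pending_activations_implies_src_in_executed} by tracing how $A$ can enter $mempool_i$ at a follower. By the code of Algorithm~\ref{alg:shard_at_follower}, a continuation is never appended to $mempool_i$ directly. The only place it enters is line~\ref{line:combine_mempool_and_pending_activations}, where $pending\_activations_i$ is merged into $mempool_i$ at the end of a successfully completed block. External activations are excluded by hypothesis. So if $A$ is a continuation added to $mempool_i$ during block $b$, it must have been in $pending\_activations_i$ during block $b$.

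The existence part then follows directly. Shard $i$ reaches line~\ref{line:combine_mempool_and_pending_activations} only after IsValidBlock returns true for block $b$. We assume block $b$ completes successfully at all shards. Lemma~\ref{lemma:pending_activations_implies_src_in_executed} therefore yields $B$ with $B\rightarrow A$, and $B$ is added to $executed_j$ by the end of block $b$.

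For the multiplicity claim, I will follow instances of $A$ as they move through the follower's data structures.
\begin{itemize}
\item An instance of $A$ in $executed_i$ got there through $speculated_i$ at line~\ref{line:add_speculated_to_executed}.
\item During IsCausallyConsistent, each continuation in $speculated_i$ is matched against a locally generated instance $in\_cont$. That instance is removed from $pending\_activations_i$ (line~\ref{line:remove_cont_from_pending_activations}) or from $mempool_i$ (line~\ref{line:remove_cont_from_mempool}).
\item The removal is what makes the map injective: each executed instance of $A$ consumes a distinct instance that was previously in $pending\_activations_i$ or $mempool_i$.
\end{itemize}

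Instances taken from $pending\_activations_i$ correspond to distinct instances of $B$ in $executed_j$, by the "moreover" part of Lemma~\ref{lemma:pending_activations_implies_src_in_executed}. Instances taken from $mempool_i$ were themselves moved there from $pending\_activations_i$ in some earlier successfully completed block $b''\le b$. I would handle these by induction on the block number, using the same lemma for block $b''$. Composing the injections then gives the required one-to-one correspondence.

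The main obstacle is the bookkeeping in this multiplicity argument. The key is to show that one pending instance is never consumed twice, once from $pending\_activations_i$ and again after promotion to $mempool_i$. The removal semantics rule this out. A second subtlety is failed blocks: their shadow $pending\_activations_i$ is discarded and never reaches $mempool_i$, so such blocks add no unmatched instances. I would state the induction hypothesis as a bijection-like invariant between instances in $pending\_activations_i\cup mempool_i\cup executed_i$ and generating instances in $executed_j$, maintained across successful blocks.
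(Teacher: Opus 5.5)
Your existence argument is exactly the paper's proof. At a follower, a continuation enters $mempool_i$ only through the merge at line~\ref{line:combine_mempool_and_pending_activations}, so $A$ was in $pending\_activations_i$, and Lemma~\ref{lemma:pending_activations_implies_src_in_executed} finishes it. The paper lets the ``moreover'' clause be inherited from that lemma as well.

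You diverge on the multiplicity clause. The paper effectively reads it as a claim about instances of $A$ in $mempool_i$. Each such instance is a merged pending instance, so the injection is immediate, and this is how the clause is later used in the $mempool_i$ cases of Lemma~\ref{lemma:no_duplication}. The $executed_i$ in the statement looks like a carry-over from Lemma~\ref{lemma:executed_implies_src_in_executed}. By taking it literally, you end up proving inside this lemma what the paper defers to the next one. You use the matching in \textsc{IsCausallyConsistent}: removing $in\_cont$ (lines~\ref{line:remove_cont_from_pending_activations}--\ref{line:remove_cont_from_mempool}) forces each executed instance to consume a distinct pending or mempool instance. You then compose this with the pending-to-generator injection across blocks.

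The paper's layered route (pending, then mempool, then executed) is shorter and modular, but it leaves that cross-block injectivity implicit. Your invariant over $pending\_activations_i\cup mempool_i\cup executed_i$ makes explicit exactly what the ``applied twice per each occurrence'' steps of Lemma~\ref{lemma:no_duplication} rely on.

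Two points to tighten if you write it out. First, invoking Lemma~\ref{lemma:pending_activations_implies_src_in_executed} for an earlier block $b''$ requires $b''$ to have completed successfully at \emph{all} shards, not just at shard $i$. This holds because an \texttt{invalid} report from any shard triggers a cluster-wide rollback. Second, the bound $b''\le b$ holds only if you count instances present at the end of block $b$.
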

\begin{proof}
    By the code, $A$ is added to $mempool_i$ only if it is found in $pending\_activations_i$ (line~\ref{line:combine_mempool_and_pending_activations}).
    The proof then follows from Lemma~\ref{lemma:pending_activations_implies_src_in_executed}.
\end{proof}

We now prove the same property for continuations that have been fully verified and added to $executed$,
completing the chain from $pending\_activations$ through $mempool$ to $executed$.

\begin{lemma}
    \label{lemma:executed_implies_src_in_executed}
    Let $b$ be the block number when activation $A$ is added to $executed_i$.
    If block $b$ completes successfully at all shards, and $A$ is a continuation, then there exists another activation $B$ such that  $B\rightarrow A$,
    and $B$ is added to $executed_j$ by the end of block $b$.
    Moreover, each instance of $A$ in $executed_i$ corresponds to a distinct instance of $B$ in $executed_j$.
  \end{lemma}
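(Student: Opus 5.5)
The plan is to trace how a continuation $A$ reaches $executed_i$ at a follower and then reuse Lemma~\ref{lemma:pending_activations_implies_src_in_executed} and Lemma~\ref{lemma:mempool_implies_src_in_executed}. By the code, $A$ enters $executed_i$ only at line~\ref{line:add_speculated_to_executed}, when $speculated_i$ is merged after \textsc{IsValidBlock} returns true for block $b$. Since $A$ is a continuation, returning true means \textsc{IsCausallyConsistent} found a matching $in\_cont$ with the same identifier. That match was removed either from $pending\_activations_i$ (line~\ref{line:remove_cont_from_pending_activations}) or from $mempool_i$ (line~\ref{line:remove_cont_from_mempool}). Continuation identifiers cryptographically bind their immutable contents, so $in\_cont$ and $A$ are the same activation up to execution metadata.

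The proof then splits into two cases.

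\emph{Case 1: $in\_cont$ came from $pending\_activations_i$.} It was added there during block $b$ itself, because $pending\_activations_i$ is cleared at the end of each block's processing. Lemma~\ref{lemma:pending_activations_implies_src_in_executed} applied to block $b$ yields $B\rightarrow A$ with $B$ in $executed_j$ by the end of block $b$.

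\emph{Case 2: $in\_cont$ came from $mempool_i$.} It was added to the mempool at the end of some block $b'<b$. Because blocks are processed sequentially and block $b$ completes at all shards, block $b'$ must also have completed at all shards; otherwise the follower would not have advanced $bid$ past $b'$. Lemma~\ref{lemma:mempool_implies_src_in_executed} then gives $B$ in $executed_j$ by the end of $b'$. Since $executed_j$ is append-only, $B$ is still there at the end of $b$.

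The ``moreover'' part requires an injection from instances of $A$ in $executed_i$ to instances of $B$ in $executed_j$. Each occurrence of $A$ in $speculated_i$ consumes, via \textsc{remove}, exactly one matching instance from $pending\_activations_i$ or $mempool_i$. The removal also stops that instance from being promoted or reused in a later block. Composing this consumption map with the distinct-instance correspondences already supplied by Lemma~\ref{lemma:pending_activations_implies_src_in_executed} and Lemma~\ref{lemma:mempool_implies_src_in_executed} gives an injection, so duplicate occurrences of $A$ cannot share a single parent instance.

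I expect the main obstacle to be the bookkeeping across blocks. I must argue that a removed instance is never counted twice: removal from $pending\_activations_i$ happens before the merge at line~\ref{line:combine_mempool_and_pending_activations}, so the removed instance is not promoted into the mempool. I must also justify that ``completes successfully at all shards'' for block $b$ implies the same for every earlier block $b'$. Rollback via checkpoints could interfere with that implication, so I would handle it by appealing to sequential block processing and to restoring the preceding checkpoint on failure.
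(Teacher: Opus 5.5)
Your proposal is correct and follows the paper's proof. Both trace $A$ into $executed_i$ through the merge of $speculated_i$, note that \textsc{IsCausallyConsistent} must have matched and removed $A$ from either $pending\_activations_i$ or $mempool_i$, and then reduce to Lemma~\ref{lemma:pending_activations_implies_src_in_executed} or Lemma~\ref{lemma:mempool_implies_src_in_executed}. Your extra bookkeeping (that the earlier block $b'$ must itself have completed at all shards, the identifier binding between $in\_cont$ and $A$, and the removal-based injection for the ``moreover'' clause) fills in steps the paper leaves implicit rather than changing the argument.
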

  \begin{proof}
    By the code, $A$ is added to $executed_i$ from $speculated_i$ at line~\ref{line:add_speculated_to_executed}, after IsValidBlock returned true for block $b$.
    This happens after all continuations in $speculated_i$ were found in either $mempool_i$ or $pending\_activations_i$ 
    (lines~\ref{line:remove_cont_from_pending_activations}--\ref{line:return_false_if_cont_not_found_or_ts_not_correct}).
    If $A$ appears in $pending\_activations_i$, proof follows from Lemma~\ref{lemma:pending_activations_implies_src_in_executed}.
    Otherwise, $A$ must be found in $mempool_i$ and proof follows from Lemma~\ref{lemma:mempool_implies_src_in_executed}.
  \end{proof}

These three lemmas together prove the Integrity property: every continuation in $executed_i$, $mempool_i$ has a valid source.
We now use these results to prove that No Duplication is maintained throughout block execution.

  \begin{lemma}[No Duplication]
    \label{lemma:no_duplication}
    For every $k$, at the end of executing block $k$ at all shards, there does not exist an activation that appears in $executed_i \cup mempool_i$ more than once, for any $i$.
  \end{lemma}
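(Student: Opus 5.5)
We would prove the lemma by induction on $k$, the number of blocks completed successfully at all shards. The inductive claim is that after block $k$, every activation appears at most once in $executed_i \cup mempool_i$ for every $i$. The base case is the empty initial state. For the inductive step we assume the invariant after block $k-1$ and track how block $k$ changes the two lists on a shard $i$.

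Only three updates occur. At line~\ref{line:add_speculated_to_executed}, $speculated_i$ is appended to $executed_i$. At line~\ref{line:combine_mempool_and_pending_activations}, $pending\_activations_i$ is appended to $mempool_i$. Inside \textsc{IsCausallyConsistent}, entries are removed from $pending\_activations_i$ or $mempool_i$. We would split the activations added to $executed_i$ into external ones and continuations.

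\textbf{External activations.} The check at line~\ref{line:return_false_if_act_not_valid_at_follower_signed_external} rejects any external $act$ that is already in $executed_i$ or in $speculated_i$. So an external activation enters $executed_i$ at most once over the whole history. Assumption~\ref{assumption:external_activation} says external activations appear at most once in their shard's mempool. Together these rule out external duplicates.

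\textbf{Continuations.} The key is a matching argument. Every continuation $A$ in $speculated_i$ is paired with exactly one entry $in\_cont$ that is removed from $pending\_activations_i$ or $mempool_i$ (lines~\ref{line:remove_cont_from_pending_activations}--\ref{line:remove_cont_from_mempool}). Because the entry is removed, a second copy of $A$ in $speculated_i$ would need a second matching entry. When $A$ moves into $executed_i$, the matched copy has therefore left $pending\_activations_i \cup mempool_i$. So $A$ is never in both $executed_i$ and $mempool_i$ because of this transfer. The remaining question is whether $mempool_i$ or $pending\_activations_i$ can hold two copies of a continuation $A$, or hold a copy of an $A$ already in $executed_i$.

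This is the main obstacle, and we would handle it with the "distinct instance" clauses of Lemmas~\ref{lemma:pending_activations_implies_src_in_executed}--\ref{lemma:executed_implies_src_in_executed}. Each copy of a continuation $A$ in $pending\_activations_i$, $mempool_i$ or $executed_i$ corresponds to a distinct instance of its generator $B$ in $executed_j$. Suppose, for contradiction, that $A$ occurs twice in $executed_i \cup mempool_i$ after block $k$. Then $B$ occurs twice in $executed_j$. There are two cases.
\begin{itemize}
\item If $B$ is external, this contradicts the external case above.
\item If $B$ is a continuation, we recurse on $B$. Termination follows by strong induction on the causal depth of $A$ along $\rightarrow$, which is well-founded since generation follows execution order.
\end{itemize}
We also need continuation identifiers to be unique per generation; we would use the fact that they cryptographically bind their contents. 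We would state this carefully, perhaps as a separate claim inside the proof. Finally, if block $k$ fails at some shard, the shadow structures are discarded and no update happens, so the invariant is preserved trivially.
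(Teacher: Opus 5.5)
Your proposal follows essentially the same route as the paper's proof. The external case is closed by the duplicate check at line~\ref{line:return_false_if_act_not_valid_at_follower_signed_external}, and a duplicated continuation is pushed back, via the distinct-instance clauses of Lemmas~\ref{lemma:pending_activations_implies_src_in_executed}--\ref{lemma:executed_implies_src_in_executed}, to a duplicated parent in $executed_j$, which contradicts minimality of causal depth. Your explicit matching argument and your caveat that a continuation's identifier must determine its generator are both correct, and they make precise two steps the paper leaves implicit. One small correction: in the external case, what excludes a copy in both $executed_i$ and $mempool_i$ is that a follower's mempool is filled only from $pending\_activations$, which holds only continuations, so it never contains an external activation; Assumption~\ref{assumption:external_activation} concerns the leader and is not the relevant fact here.
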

  \begin{proof}
    Define the causal depth of an external activation to be zero and, whenever
    $A \rightarrow A'$, define $\mathit{depth}(A')=\mathit{depth}(A)+1$.
    We prove this lemma with induction on $k$.
    
    Base case: $k=0$. All data structures are empty.

    Inductive step: 
    Assume that the statement is true for any $i\leq k$ and prove for $i=k+1$.
    Assume by negation the statement does not hold.
    Thus, there exists a shard $i$ and an activation that appears in $executed_i$ and $mempool_i$ more than once, after block $k+1$ has completed at all shards.
    Let $A$ be such activation, with the shortest causal depth $d$. 
    If $A$ is an external activation (i.e., $d=0$), then $A$ is not added to follower's $mempool_i$ and thus it does not appear in $mempool_i$ more than once.
    Hence, it must appear in $executed_i$ more than once.
    By the code, $A$ is added to $executed_i$ only if it was previously added to $speculated_i$ at line~\ref{line:add_act_to_speculated_at_follower}.
    This only happens if it passes the check at line~\ref{line:return_false_if_act_not_valid_at_follower_signed_external}, making sure it was not already in $executed_i$ or $speculated_i$.
    Hence, the second addition of $A$ to $executed_i$ is a contradiction, and the statement holds.

    Otherwise, $A$ is a continuation.
    We separate the cases that cause the violation.
    \begin{itemize}
        \item $A$ appears more than once in $mempool_i$:
        
        Examine two different occurrences of $A$ in $mempool_i$.
        By Lemma~\ref{lemma:mempool_implies_src_in_executed} applied twice per each occurrence, since we know by assumption all shards have completed block $k+1$,
        there exists another activation $B$ such that $B\rightarrow A$, and $B$ is added twice to $executed_j$ by the end of block $k+1$.
        This means $B$ appears in $executed_j$ more than once, after block $k+1$ has completed at all shards,
        and $B$ has causal depth $d-1$, in contradiction to the assumption that $A$ has the shortest causal depth.

        \item $A$ appears more than once in $executed_i$:
        Similar to the previous case, examine two different occurrences of $A$ in $executed_i$.
        By Lemma~\ref{lemma:executed_implies_src_in_executed} applied twice per each occurrence, since we know by assumption all shards have completed block $k+1$,
        there exists another activation $B$ such that $B\rightarrow A$, and $B$ is added twice to $executed_j$ by the end of block $k+1$.  
        This means $B$ appears in $executed_j$ more than once, after block $k+1$ has completed at all shards,
        and $B$ has causal depth $d-1$, in contradiction to the assumption that $A$ has the shortest causal depth.

        \item $A$ appears in both $mempool_i$ and $executed_i$:
        Finally, examine the two occurrences of $A$ in $mempool_i$ and $executed_i$.
        By Lemma~\ref{lemma:mempool_implies_src_in_executed} applied for the occurrence in $mempool_i$, and Lemma~\ref{lemma:executed_implies_src_in_executed} applied for the occurrence
        in $executed_i$, since we know by assumption all shards have completed block $k+1$, there exists another activation $B$ such that $B\rightarrow A$, and $B$ is added to $executed_j$ twice
        by the end of block $k+1$.
        This means $B$ appears in $executed_j$ more than once, after block $k+1$ has completed at all shards, and $B$ has causal depth $d-1$, in contradiction to the assumption that $A$ has the shortest causal depth.

    \end{itemize}
  \end{proof}

Having proved that No Duplication holds, we now turn to the Happens-before Relation property.
We first show that local timestamps at each shard increase monotonically during successful block execution.

\begin{lemma}
    \label{lemma:monotonicity_of_local_ts}
    If shard $i$ completes the execution of block $k$ successfully,
    then the local timestamp $ts$ in shard $i$ is strictly monotonically increasing throughout the execution up to and including block $k$.
\end{lemma}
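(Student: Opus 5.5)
The plan is to prove the claim by induction on the block index, and within a block by induction on the position of the activation in the shard proposal. The first thing to note is that the follower's local $ts$ is modified only inside Phase~1 of \textsc{IsValidBlock}: at line~\ref{line:increment_ts_at_follower_for_external_act} for external activations, and at line~\ref{line:set_ts_for_cont_at_follower} for continuations. Between blocks, and during Phase~2, $ts$ is untouched, so monotonicity across a block boundary follows from monotonicity within each block.

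For the outer induction I will use the fact that shard $i$ advances $bid$ only on success and processes proposals sequentially. So if shard $i$ completes block $k$ successfully, every block $k'<k$ it executed on the way was also completed successfully. If a block was instead rejected, the consensus representative restores the checkpoint, including the local timestamp (Section~\ref{sec:gen_consensus}). Hence the $ts$ trajectory up to and including block $k$ is the concatenation of the trajectories of successfully completed blocks. It therefore suffices to show that, within a single successfully completed block, every update to $ts$ is a strict increase.

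Within a block I will split into the two cases of Phase~1.
\begin{itemize}
\item \emph{External activation.} The update $ts \gets ts+1$ is trivially a strict increase.
\item \emph{Continuation $act$.} The follower first records $ts\_at\_execution[act] \gets ts$, the value of $ts$ just before the update, and then sets $ts \gets act.ts$. Since the block completed successfully, \textsc{IsCausallyConsistent} returned true. So for this $act$ it found a matching $in\_cont$ with the same identifier and checked $act.ts = \max(in\_cont.ts, ts\_at\_execution[in\_cont])+1$. This gives $act.ts \geq ts\_at\_execution[act]+1$, so the new $ts$ strictly exceeds the previous one.
\end{itemize}

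The main obstacle is that the timestamp check is retrospective. During Phase~1 the follower blindly adopts $act.ts$ from the leader, so $ts$ may temporarily fail to increase. The argument must tie success of Phase~2 back to each individual Phase~1 step. The delicate point is that the key $ts\_at\_execution[in\_cont]$ must refer to the entry recorded for the same execution of $act$.

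I would handle this in three steps. First, continuation identifiers cryptographically bind their contents, so $in\_cont$ and $act$ share a key. Second, $ts\_at\_execution$ is cleared at every block end (line~\ref{line:clear_ts_at_execution_at_follower}), so the entry was recorded during the current block. Third, $act$ is unique within the block: by Lemma~\ref{lemma:no_duplication} and the removal of each matched $in\_cont$ at lines~\ref{line:remove_cont_from_pending_activations}--\ref{line:remove_cont_from_mempool}, each proposed continuation is matched once. Hence the entry was not overwritten by a second occurrence of the same identifier. With this correspondence established, the per-step strict increase holds, and the induction closes.
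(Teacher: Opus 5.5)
Your proof is correct in its main line and follows the paper's argument. You split on external activations ($ts$ rises by one) versus continuations, and for continuations you use the Phase~2 check $act.ts=\max(in\_cont.ts,ts\_at\_execution[in\_cont])+1$ to push the adopted value above the recorded one. Your direct per-step induction replaces the paper's ``first violating pair'' contradiction. You are also more careful than the paper about rollbacks and about which $ts\_at\_execution$ entry Phase~2 reads; the paper simply assumes it still holds the value recorded when $A'$ was processed.

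The step that does not go through as written is the uniqueness of $act$ within the block.
\begin{itemize}
\item Lemma~\ref{lemma:no_duplication} is proved under the hypothesis that \emph{all} shards complete the block. This statement assumes only that shard $i$ does.
\item Removing matched $in\_cont$ entries does not by itself exclude repeats. Suppose a Byzantine leader lists a continuation twice in shard $j$'s proposal. Shard $j$ executes both copies in Phase~1 and sends two copies of each child to shard $i$ before failing its own Phase~2. Shard $i$ can then match two occurrences of the same identifier.
\end{itemize}
This does not matter where the lemma is actually used, since Lemma~\ref{lemma:happens_before_relation} assumes all shards succeed.

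You can also obtain uniqueness from shard $i$'s own checks, keeping the map keyed by identifier as you assume:
\begin{enumerate}
\item Let $v_r$ be the value of $ts$ after the $r$-th proposed activation.
\item Suppose an identifier occurs at steps $r<L$, with $L$ its last occurrence. During Phase~2 the map holds $v_{L-1}$, so the check at step $r$ gives $v_r\ge v_{L-1}+1$.
\item Take $r$ to be the largest such non-final occurrence. Every later step is an external or a final occurrence, so each is a strict increase.
\item Hence $v_{L-1}\ge v_r$, a contradiction.
\end{enumerate}
So success at shard $i$ alone forces the continuation identifiers in the block to be distinct, and the rest of your proof holds under the statement's hypothesis.
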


\begin{proof}
Assume by negation that the local timestamp in shard $i$ is not strictly monotonically increasing.
The local timestamp changes when processing an external activation (line~\ref{line:increment_ts_at_follower_for_external_act})
or when processing a continuation (line~\ref{line:set_ts_for_cont_at_follower}).
In both cases, the change corresponds to processing an activation and adding it to $speculated_i$.

By assumption, there exist two activations $A$ and $A'$ such that $A$ is processed before $A'$ (either within the same block or across blocks),
 with shard's local timestamps $t$ and $t'$ respectively at the time they are processed, such that $t \geq t'$.
Choose the first such pair -- that is, $A'$ is the first activation processed after $A$ where $t \geq t'$.

If $A'$ is external, then by the code, the local timestamp is set to $t' = t + 1$ (line~\ref{line:increment_ts_at_follower_for_external_act}),
 contradicting $t' \leq t$.

Otherwise, $A'$ is a continuation.
When $A'$ is processed, the follower records $ts\_at\_execution[A'] \gets t$ (line~\ref{line:set_ts_at_execution_for_act}), 
then sets the local timestamp to $t' = A'.ts$ (line~\ref{line:set_ts_for_cont_at_follower}).
Later, when the verify function verifies $A'$, it checks that $A'.ts \geq ts\_at\_execution[A'] + 1 = t + 1$ (line~\ref{line:return_false_if_cont_not_found_or_ts_not_correct}), or otherwise fails.
Therefore $t' \geq t + 1$, contradicting the assumption that $t' \leq t$.

Since block $k$ completes successfully, no verification fails, so our assumption must be false.
Therefore the local timestamp is strictly monotonically increasing throughout the execution.
\end{proof}

With monotonicity of local timestamps proven,
we can now show that the Happens-before Relation is preserved for all activation pairs.

\begin{lemma}[Happens-before Relation]
    \label{lemma:happens_before_relation}
        If all shards complete the execution of block $k$ successfully, then for any two activations $A, A'$ in $executed_i$ and $executed_j$ respectively,
        if $A\prec A'$, then $A.ts < A'.ts$.
    \end{lemma}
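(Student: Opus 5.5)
We prove the statement by mirroring the leader-side argument of Theorem~\ref{theorem:validity_of_system_state}, with the follower's checks in \textsc{IsValidBlock} taking the place of the leader's timestamp assignment. The relation $\prec$ is the transitive closure of two primitive relations. The first is same-shard execution order. The second is direct generation $B \rightarrow A$. We therefore first prove $A.ts < A'.ts$ for each primitive relation separately. The general case then follows by induction on the length of a chain $A = A_0, A_1, \ldots, A_\ell = A'$ in which each consecutive pair is related by one primitive step. Transitivity of $<$ on integers closes the induction, exactly as in the last case of the leader proof. Throughout, we assume all shards complete every block up to $k$ successfully. This lets us apply Lemma~\ref{lemma:monotonicity_of_local_ts} at every shard, together with Lemmas~\ref{lemma:executed_implies_src_in_executed} and~\ref{lemma:no_duplication}.

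\emph{Same-shard case.} Let $A$ precede $A'$ in $executed_i$. Both activations were appended from $speculated_i$, so they were processed in that order during Phase~1 of blocks at most $k$. The recorded timestamp $A.ts$ equals the local $ts$ immediately after $A$ is processed. For an external activation this holds because $act.ts = ts$ is checked after the increment. For a continuation it holds because $ts \gets act.ts$. The same applies to $A'$. By Lemma~\ref{lemma:monotonicity_of_local_ts}, the local timestamp strictly increases between the two processing steps, so $A.ts < A'.ts$.

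\emph{Direct generation case.} Let $B \rightarrow A'$ with $B \in executed_j$ and $A' \in executed_i$. By Lemma~\ref{lemma:executed_implies_src_in_executed}, the instance of $A'$ that was verified was matched in \textsc{IsCausallyConsistent} against a locally generated copy $in\_cont$ produced by executing a source $B$. By No Duplication (Lemma~\ref{lemma:no_duplication}) and the fact that identifiers bind contents, this source is the $B$ in question. \textsc{Execute} sets $in\_cont.ts = B.ts$, where $B.ts$ is the timestamp carried in the verified proposal. The verification then enforces $A'.ts = \max(B.ts, ts\_at\_execution[A']) + 1 > B.ts$.

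The main obstacle lies in this generation step. The difficulty is to justify that the $B.ts$ stamped into the generated copy is the \emph{final, verified} timestamp of $B$. A concern is that $B$'s execution at shard $j$ could be speculative in a block that fails, or could belong to a different instance. To settle this, we will rely on two facts. First, the follower executes $B$ with its proposal-supplied $ts$, which is fixed before \textsc{Execute} is called. Second, the hypothesis that all shards succeed means $B$'s block was accepted, and its $ts$ was validated at shard $j$ by the external check or the continuation check. The ``distinct instance'' clause of Lemma~\ref{lemma:executed_implies_src_in_executed} then ties the matched copy to that unique $B$.
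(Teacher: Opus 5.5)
Your proof is correct and follows essentially the same route as the paper: the same-shard case via strict monotonicity of the local timestamp (Lemma~\ref{lemma:monotonicity_of_local_ts}), the direct-generation case via $in\_cont.ts = B.ts$ together with the $\max(\cdot,\cdot)+1$ check in \textsc{IsCausallyConsistent}, and induction along the causal chain. Your observations that $A.ts$ equals the local timestamp immediately after processing, and that the follower never rewrites a proposal's $ts$, make explicit steps the paper leaves implicit; the paper likewise takes for granted that the matched copy was generated by the given parent.
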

    \begin{proof}
        Let $A$ and $A'$ be two activations in $executed_i$ and $executed_j$ respectively such that $A\prec A'$.
        We consider the cases that define $A\prec A'$.
    
        Assume $i = j$ and $A$ appears before $A'$ in the sequential order of $executed_i$.
        At the end of block $k$, $executed_i$ is appended with $speculated_i$.
        Since the order in $speculated_i$ reflects the order in which activations were processed during IsValidBlock,
        if $A$ was processed in block $k$, then $A$ was processed before $A'$.
        Note that this ordering holds even if $A$ was executed in an earlier block, 
        since activations are appended to $executed_i$ in order across blocks.

        Let $t$ be the local timestamp in shard $i$ when $A$ was executed.
        By Lemma~\ref{lemma:monotonicity_of_local_ts}, since shard $i$ completes blocks successfully,
        the local timestamp is strictly monotonically increasing both within and across blocks.
        Therefore, the local timestamp when processing $A'$ in block $k$ is strictly greater than the 
        local timestamp when $A$ was processed (whether in block $k$ or an earlier block).
        This means that when $A'$ is executed, the local timestamp in shard $i$ is at least $t+1$.

        Since $A'$ is added to $executed_i$, it passed verification-- either the check at 
        line~\ref{line:return_false_if_act_not_valid_at_follower_signed_external} (if external) or the check 
        at line~\ref{line:return_false_if_cont_not_found_or_ts_not_correct} (if continuation).
        In both it was compared to the local timestamp (either explicitly or to a recorded value captured in $ts\_at\_execution[A']$).
        and $A'.ts$ is verified to be at least $t+1$.
        Therefore, $A'.ts > A.ts$.

        Alternatively, if $i \neq j$ and $A$ triggered $A'$, then when the follower verified $A'$, the verify function checked
         that $A'.ts \geq incoming\_cont.ts + 1$, where $incoming\_cont$ is the locally generated continuation corresponding to $A'$ 
         (line~\ref{line:return_false_if_cont_not_found_or_ts_not_correct}).
        When the follower speculatively executed $A$, it set $incoming\_cont.ts = A.ts$ (line~\ref{line:set_ts_for_cont}).
        Therefore, $A'.ts \geq A.ts + 1 > A.ts$.
        
        The transitive case, where $A \prec A'$ through a chain of intermediate activations, 
        follows by induction on the length of the causal chain.
    
        Therefore, in all cases, $A.ts < A'.ts$.

    \end{proof}

The preceding lemmas verify individual properties during block execution.
We now combine these results to prove that successful block execution maintains all system state invariants.

\begin{lemma}
\label{lemma:block_execution_correctness} 
    If all shards complete the execution of block $k$ successfully, then applying the global proposal for block $k$ maintains all system state invariants.
\end{lemma}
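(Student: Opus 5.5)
The plan is to verify the five system state invariants one at a time, assuming all shards complete block $k$ successfully. Each invariant will be charged either to a lemma already proved in this subsection or to a direct shard-local check performed by \textsc{IsValidBlock}. The argument runs by induction on $k$. The state before block $k$ is assumed valid; the base case is the empty state. The claim is that after lines~\ref{line:add_speculated_to_executed}--\ref{line:combine_mempool_and_pending_activations} merge the shadow structures at every shard, the state is again valid. Because blocks are applied atomically and only on success, the only state changes to inspect are the appends of $speculated_i$ to $executed_i$ and of $pending\_activations_i$ to $mempool_i$, together with the removals from $mempool_i$ and $pending\_activations_i$ done inside \textsc{IsCausallyConsistent}.

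\textbf{Shard-local invariants.} Following the Remark, Authenticity and Well-formedness need only the Phase~1 checks.
\begin{itemize}
\item Authenticity: every external activation placed in $speculated_i$ passed the signature check at line~\ref{line:return_false_if_act_not_valid_at_follower_signed_external}.
\item Well-formedness for $executed_i$: every activation in $speculated_i$ passed the shard check at line~\ref{line:return_false_if_act_not_valid}.
\item Well-formedness for $mempool_i$: continuations are routed by the deterministic \textsc{Execute} to $cont.shard$, so each new entry of $mempool_i$ targets shard $i$.
\end{itemize}

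\textbf{Cross-shard invariants.}
\begin{itemize}
\item Integrity follows from Lemmas~\ref{lemma:mempool_implies_src_in_executed} and~\ref{lemma:executed_implies_src_in_executed}. Every continuation newly in $mempool_i$ or $executed_i$ has a parent $B$ with $B\rightarrow A$, and $B\in executed_j$ by the end of block $k$. Continuations already present were covered by the induction hypothesis, since $executed_j$ is append-only.
\item No Duplication is exactly Lemma~\ref{lemma:no_duplication} instantiated at $k$.
\item The Happens-before Relation is Lemma~\ref{lemma:happens_before_relation}.
\end{itemize}

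\textbf{Bridging to leader semantics.} I would add a short argument linking the follower's $\prec$ to the one in the invariant definition. By Lemma~\ref{lemma:deterministic_execution}, re-executing the proposal from the same checkpoint yields the same account states and the same generated continuations as the leader. Hence the causal generation edges and per-shard orders seen at the follower coincide with those of the proposal being applied.

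I expect the main obstacle to be Integrity and No Duplication for continuations whose parent executes in the same block $k$ on another shard, where the child may be processed before its parent. The cited lemmas handle this only because Phase~2 runs after all of Phase~1, and because the hypothesis says \emph{all} shards succeed. I would therefore state explicitly that the conclusion is about the merged state at the block boundary, not any intermediate point. I would also check that the removals from $mempool_i$ in \textsc{IsCausallyConsistent} cannot break Integrity for the remaining entries. Removal only shrinks the set, so the universally quantified invariant is preserved.
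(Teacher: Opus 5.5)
Your proposal is correct and follows the paper's proof. You split the invariants into the shard-local ones (Authenticity and Well-formedness), discharged by the Phase~1 checks in \textsc{IsValidBlock}, and the cross-shard ones, discharged by the two Integrity lemmas, the No Duplication lemma, and the Happens-before lemma; your induction framing and the extra mempool Well-formedness remark are harmless additions. You should drop the ``bridging to leader semantics'' paragraph: it is unnecessary because the invariants are evaluated on the follower's own state and generation edges, and as written it assumes a correct leader with identical state, which this safety lemma does not assume.
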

\begin{proof}

Unlike the leader's execution where all properties are maintained after every step,
at followers some properties are only maintained at block boundaries -- that is, at the end of block execution across all shards.
    
Our invariants partition by verification scope.
\textbf{Authenticity} and \textbf{Well-formedness} are verified locally during each activation's processing
(lines~\ref{line:return_false_if_act_not_valid_at_follower_shard_or_bid}, \ref{line:return_false_if_act_not_valid_at_follower_signed_external_or_in_executed_or_speculated}).
Failing to satisfy these properties for any activation immediately causes IsValidBlock to return false at the relevant shard, 
and the shard fails the execution of block $k$.

\textbf{Integrity} and \textbf{Happens-before Relation} cannot be verified by examining individual activations in isolation -- they require reasoning about causal relationships across shards and may be temporarily violated during parallel execution.
\textbf{No Duplication} similarly requires reasoning across a shard's entire execution history, though it is never violated within a single shard during execution.
These three properties are verified at block boundaries through the preceding lemmas:
Integrity follows from Lemmas~\ref{lemma:mempool_implies_src_in_executed} and~\ref{lemma:executed_implies_src_in_executed},
No Duplication from Lemma~\ref{lemma:no_duplication},
and Happens-before Relation from Lemma~\ref{lemma:happens_before_relation}.
Therefore, if all shards complete block $k$ successfully, all system state invariants hold.
\end{proof}

\paragraph{Block Execution Liveness}

Our previous focus was on showing that successfully completing block $k$ implies the proposal is valid (safety).
We now turn to liveness: when the leader is correct and conditions are favorable,
all correct followers successfully complete block execution.
We begin by showing that timestamp verification failures indicate leader misbehavior.

\begin{lemma}
    \label{lemma:follower_verifies_leader_ts}
    If a follower with the same state as the leader before block $k$ fails a timestamp check for activation $A$
    (at line~\ref{line:return_false_if_act_not_valid_at_follower_signed_external_or_in_executed_or_speculated} or line~\ref{line:return_false_if_cont_not_found_or_ts_not_correct}),
    then the leader did not correctly apply the Lamport timestamp mechanism for $A$.
    \end{lemma}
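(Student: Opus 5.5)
We argue by contraposition: assume the leader applied the Lamport mechanism of Algorithm~\ref{alg:shard_at_leader} correctly for $A$, and show that neither timestamp check can fail at a follower that starts block $k$ in the leader's state. The argument rests on a stronger claim, proved by induction on the position of activations in shard $i$'s proposal for block $k$. The claim is that, just before the follower processes an activation, its local $ts$ equals the leader shard's local $ts$ just before the leader executed that same activation. The base case is the hypothesis that the follower and leader states, including local timestamps, coincide at the boundary of block $k-1$.

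For the inductive step we split into two cases according to line~\ref{line:leader_sets_ts_for_act}.

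\textbf{External $A$.} The leader sets $A.ts = \max(ts, 0)+1 = ts+1$. The follower sets $ts \gets ts+1$ at line~\ref{line:increment_ts_at_follower_for_external_act}. By the inductive hypothesis the two values coincide, so the comparison $A.ts \neq ts$ at line~\ref{line:return_false_if_act_not_valid_at_follower_signed_external} is false. Both sides end with local $ts = A.ts$, which closes the induction for this case.

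\textbf{Continuation $A$.} The follower records $ts\_at\_execution[A]$ equal to its current $ts$, which by induction is the leader's pre-execution $ts$. It then sets its local $ts$ to $A.ts$, exactly as the leader does at line~\ref{line:leader_updates_ts}, so the invariant is preserved. At verification (line~\ref{line:return_false_if_cont_not_found_or_ts_not_correct}) the follower compares $A.ts$ with $\max(in\_cont.ts, ts\_at\_execution[A])+1$. Here $in\_cont$ is the locally generated copy of $A$, whose $ts$ was set at line~\ref{line:set_ts_for_cont} to its parent's timestamp. We need that parent timestamp to equal the value the leader used as $act.ts$ before recomputation at line~\ref{line:leader_sets_ts_for_act}.

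This last identification is the main obstacle. The parent may lie on another shard, or in an earlier block, so the single-shard induction does not cover it. We would handle it by a nested induction. Within block $k$ we order activations by causal depth (or by the happens-before order $\prec$). For a parent in an earlier block, we use the shared pre-block state: every copy of $A$ already in the follower's $mempool$ carries the timestamp committed with its parent. For a parent in block $k$ on another shard, we apply the inductive hypothesis to that shard's proposal: its executed parent has the same timestamp as the leader's, and by determinism (Lemma~\ref{lemma:deterministic_execution}) it generates the same continuation with the same $ts$.

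With these pieces, a correct leader's $A.ts$ equals exactly the value the follower recomputes, so the check passes. Contrapositively, any failure of either check means the leader's $A.ts$ deviates from $\max(ts, act.ts)+1$, that is, the leader misapplied the mechanism. A separate failure cause, namely $in\_cont = \bot$, is not a timestamp failure and is excluded by the lemma's hypothesis.
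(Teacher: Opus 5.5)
Your proof is correct and is the paper's argument made explicit. The paper simply observes that a follower starting from the leader's state recomputes, by the same deterministic Lamport rules, the timestamps the leader should have assigned; your position-by-position induction on the local $ts$, together with identifying $in\_cont.ts$ with the parent's timestamp, is that observation unpacked. Like the paper's proof, your induction actually uses the leader's correct behaviour on every earlier activation of shard $i$ and on $A$'s parent, not only on $A$, which is all Lemma~\ref{lemma:follower_completes_correct_leader_block} needs. The nested causal-depth induction is also unnecessary: the follower never rewrites the $ts$ field of a proposed activation, so $in\_cont.ts$ is simply the parent's recorded timestamp.
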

    \begin{proof}
    The follower verifies timestamps by recomputing them according to the Lamport timestamp rules and comparing to the assigned ts by the leader:
    incrementing for external activations, and computing $\max(parent.ts, local\_ts) + 1$ for continuations.
    Since the follower starts with the same state and executes the same deterministic logic,
    it computes the same timestamp values the leader should have computed.
    If verification fails, the leader's reported timestamp differs from the correct Lamport timestamp.
\end{proof}

With timestamp verification correctness established, we can now prove the main liveness result:
correct leaders produce proposals that correct followers accept.
This is the first lemma that reasons about both leader and follower behavior together,
showing that correct leader execution guarantees successful follower verification.

\begin{lemma}
    \label{lemma:follower_completes_correct_leader_block}
    Assume the leader is correct and sends a proposal for block $k$.
    After GST, a correct follower that receives the proposal and has the same state as the leader had before executing block $k$
    successfully completes execution of block $k$ at all shards.
\end{lemma}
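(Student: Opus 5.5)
We need to show that IsValidBlock returns true at every shard $i$ of the follower for block $k$. Equivalently, no \texttt{return false} is reached. We proceed shard by shard through Phase~1 and then Phase~2.

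The follower starts from the same state as the leader before block $k$. By Lemma~\ref{lemma:deterministic_execution}, replaying each shard proposal in order reproduces the leader's account states, continuations (with their $ts$ and $min\_bid$ inherited via lines~\ref{line:set_ts_for_cont}--\ref{line:set_min_bid_for_cont}) and local timestamps. We maintain this inductively over the positions of $order\_proposal$, jointly for all shards. The per-shard evolution depends only on the shard's own proposal order and the identities of the incoming continuations, not on their arrival times.

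\textbf{Phase 1.} We argue by induction over the activations of $order\_proposal$ at shard $i$.
\begin{itemize}
\item The check at line~\ref{line:return_false_if_act_not_valid} passes. The correct leader placed $act$ at shard $i$ (Well-formedness, Theorem~\ref{theorem:validity_of_system_state}), and it selected $act$ only when $act.min\_bid \le bid = k$ (line~\ref{line:filter_pending_activations}).
\item For an external $act$, the correct signature follows from Authenticity. Freshness with respect to $executed_i$ and $speculated_i$ follows from No Duplication at the leader, since these sets equal the leader's executed list up to this point.
\item The timestamp equality $act.ts = ts+1$ holds because the leader computed $\max(ts,0)+1$ from the identical local $ts$.
\item For a continuation, the follower simply records $ts\_at\_execution$ and adopts $act.ts$, which equals the leader's local $ts$ after that step. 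This keeps the invariant "follower $ts$ = leader $ts$" intact.
\end{itemize}
Execution then yields the same continuations, which are sent intra-cluster.

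\textbf{Phase 2.} This is the main obstacle: timing and cross-shard matching. After GST, each follower shard finishes Phase~1 within $\delta_{\mathit{exec}}$, and every generated continuation reaches its destination within a further $\delta_{\mathit{intra}}$. Hence at the deadline (line~\ref{line:wait_for_deadline_at_follower}), every continuation produced by re-executing any shard's block-$k$ proposal is in the destination's $pending\_activations$.

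Now fix a continuation $A'$ in $speculated_i$. By Integrity at the correct leader, its parent $A$ lies in some $executed_j$, and by Block Causality (Lemma~\ref{lemma:leader_maintains_block_causality}) $A.bid \le k$. There are two cases:
\begin{itemize}
\item If $A.bid = k$, then $A$ is in shard $j$'s block-$k$ proposal. The follower re-executed it, so a continuation with identifier $A'.id$ is in $pending\_activations_i$.
\item If $A.bid < k$, then $A$ was executed in an earlier block. Its continuation was merged into $mempool_i$ at line~\ref{line:combine_mempool_and_pending_activations}, or it is in the current $pending\_activations_i$. By agreement with the leader's state before block $k$, it has not been consumed.
\end{itemize}
Identifiers bind contents, so the lookup succeeds. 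No Duplication at the leader ensures that distinct entries of $speculated_i$ remove distinct matching continuations.

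Finally, we verify the timestamp check. The matched $in\_cont$ carries $ts = A.ts$ (by determinism). The recorded $ts\_at\_execution[in\_cont]$ equals the leader's local $ts$ just before executing $A'$. So $\max(in\_cont.ts, ts\_at\_execution)+1$ is exactly the value the leader assigned at line~\ref{line:leader_sets_ts_for_act}, and by Lemma~\ref{lemma:follower_verifies_leader_ts} this check cannot fail.

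One technicality needs care: $ts\_at\_execution$ is keyed by the proposal's activation, while the lookup uses the matched local copy. We identify the two through the shared identifier.

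Since IsValidBlock returns true at every shard, each shard completes block $k$ successfully.
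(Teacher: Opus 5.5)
Your proof is correct and follows essentially the same route as the paper: you go through every rejection point of \textsc{IsValidBlock} (the shard and $min\_bid$ check, the external signature, duplication and timestamp check, and the continuation lookup with its Lamport check). You discharge the local checks through the correct leader's invariants, and you handle a missing continuation by splitting on whether its parent executed in block $k$ or earlier, using Block Causality and the $\delta_{\mathit{exec}}+\delta_{\mathit{intra}}$ deadline. The differences are cosmetic: you argue directly rather than by contrapositive, and you make explicit some points the paper leaves implicit, namely that follower and leader local timestamps stay equal step by step, that No Duplication makes the matches distinct, and how $ts\_at\_execution$ is keyed.
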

\begin{proof}

We prove the contrapositive.
Assume a correct follower receives a proposal for block $k$ after GST and has the same state as the leader had before executing block $k$,
but fails to complete execution of block $k$ at some shard $i$.
Then the leader is not correct.

By the code, the execution failure means that \texttt{IsValidBlock} called on $order\_proposal$ returns false at shard $i$.
We examine all reasons that could cause this failure.
Note that any of the checks that may fail is associated with a specific activation in the proposal.
Denote this activation by $A$.

\begin{itemize}
    \item Line~\ref{line:return_false_if_act_not_valid_at_follower_shard_or_bid}:
    By the code, this line returns false if  either $A.shard$ is not the shard $i$ on which it is attempted to be executed
    or $A.min\_bid$ is greater than the bid of the block.
    In the first case, the well-formedness property is violated, implying the leader is not correct (Lemma~\ref{lemma:correct_leader_creates_valid_proposal}).
    In the second case, if the leader was correct, $A$ should not have been included in the proposal, implying the leader is not correct
     (Lemma~\ref{lemma:correct_leader_creates_valid_proposal}).

    \item Line~\ref{line:return_false_if_act_not_valid_at_follower_signed_external_or_in_executed_or_speculated}:
    By the code, this line returns false if $A$ is external activation and one of the following holds.

    If $A.ts$ is not equal to the timestamp $ts$ of the shard $i$ at the time of execution.
    By Lemma~\ref{lemma:follower_verifies_leader_ts} and Lemma~\ref{lemma:correct_leader_creates_valid_proposal}, the leader must be Byzantine.

    If $A$ is not properly signed, implying the leader is not correct. (Lemma~\ref{lemma:correct_leader_creates_valid_proposal})

    If  $\emph{A}$ is already in $executed_i$ or $speculated_i$ before it is executed, 
    it follows from the code that $A$ was previously executed on this shard.
    Thus, the leader's proposal violates the No Duplication property, implying the leader is incorrect. (Lemma~\ref{lemma:correct_leader_creates_valid_proposal})

    \item Line~\ref{line:return_false_if_cont_not_found_or_ts_not_correct}:
    By the code, this line returns false if $A$ is a continuation and one of two conditions holds:
    (1) $A$ is not found in either $pending\_activations_i$ or $mempool_i$, or
    (2) $A$ is found but its timestamp does not satisfy the Lamport timestamp constraint.

    \textbf{Case 1: $A$ is missing.}

    Since the leader creates correct proposals (Lemma~\ref{lemma:correct_leader_creates_valid_proposal}),
    there must be another activation $B$ such that $B \rightarrow A$ and $B \in executed_j$ when the proposal was created.
    Due to Block Causality, $B$ was executed at the leader in some block $k' \leq k$ on shard $j$.

    If $k' < k$, then by timing bounds after GST, $A$ arrives at the leader's shard $i$ within $\delta_{intra}$ time
    (which is smaller than block tick interval), so $A$ is added to either $mempool_i$ by the end 
    of block $k'$ or during block $k' + 1$.
    
    If $A$ reaches $mempool_i$ by the end of any block $k'' < k$, then by the same-state assumption,
    $A$ is also in the follower's $mempool_i$ at the start of block $k$, contradicting the assumption 
    that $A$ is missing.
    
    Otherwise, $A$ arrives during block $k' + 1 = k$.
    Then $A$ is added to $pending\_activations_i$ before the deadline $start\_time + \delta_{exec} + \delta_{intra}$. The same occurs at the 
    follower: $B$ executes, generates $A$, and by timing bounds, $A$ arrives at 
    $pending\_activations_i$ before the deadline, contradicting the assumption that $A$ is missing.
    Hence, in all subcases when $k' < k$, the leader must be incorrect.

    Otherwise, $k' = k$. In this case, $B$ executes during block $k$ at both leader and follower.
    When the follower executes $B$, it generates $A$ and sends it to shard $i$ (line~\ref{line:send_cont_to_shard_at_follower}).
    Let $start\_time$ denote the time when the follower begins executing the block proposal.
    By the code, the follower waits until $deadline = start\_time + \delta_{exec} + \delta_{intra}$ before verifying continuations (line~\ref{line:wait_for_deadline_at_follower}).
    Since the follower receives the proposal after GST, timing bounds hold:
    $\delta_{exec}$ bounds the execution time for all activations in the proposal,
    and $\delta_{intra}$ bounds intra-cluster message delay.
    Therefore, $A$ arrives at shard $i$ and is added to $pending\_activations_i$ before the deadline,
    contradicting the assumption that $A$ is missing.

    Hence, in both subcases, the leader must be incorrect.

    \textbf{Case 2: $A$ is found but timestamp is incorrect.}

    By Lemma~\ref{lemma:follower_verifies_leader_ts} and Lemma~\ref{lemma:correct_leader_creates_valid_proposal},
    the leader must be Byzantine.

    Therefore, in all cases, the leader is not correct.

\end{itemize}

\end{proof}

%% file: appendix/appendix_smr.tex
\section{\gen{} Consensus Representative Protocol and Correctness}
\label{appendix:smr_proofs}

\input{algs/consensus_rep_alg.tex}

\label{sec:gen_implements_smr}

We now prove that \gen{} (Algorithm~\ref{alg:consensus_representative_alg}), built atop a standard SMR component, satisfies all SMR properties including Causal Liveness.
We begin by establishing that correct clusters maintain consistent state across committed blocks.

\begin{lemma}
    \label{lemma:state_consistency}
    If two correct clusters commit the same sequence of blocks, they have reached the same system state.    
\end{lemma}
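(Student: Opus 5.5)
The argument is an induction on the length $k$ of the committed block sequence $B^1,\ldots,B^k$, with a strengthened hypothesis: after committing $B^1,\ldots,B^k$, each correct cluster holds, for every shard $i$, the same account state, the same $executed_i$ and $mempool_i$ (as multisets of activations with identical metadata), and the same local timestamp $ts$. \emph{System state} here means the persisted checkpoint taken at the block boundary, not any later speculative state.

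\textbf{Base case.} For $k=0$, every cluster starts from the same initial state: empty lists, $ts=0$, $bid=0$, and the same initial account state.

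\textbf{Inductive step.} Assume both correct clusters agree on the checkpoint after block $k-1$. Consider the commit of $B^k$. By the representative protocol of Section~\ref{sec:gen_consensus}, each correct cluster ends up applying exactly the committed shard blocks $B_i^k$ to the checkpoint after $k-1$. There are two cases.
\begin{itemize}
\item If the speculatively executed block equals $B^k$, that speculative execution already started from the checkpoint after $k-1$.
\item Otherwise, the cluster rolls back to that checkpoint and re-executes $B^k$.
\end{itemize}
Either way, the resulting state is what you get by running Phase~1 of \textsc{IsValidBlock} on $B_i^k$, per shard, from the common checkpoint.

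Lemma~\ref{lemma:deterministic_execution} then gives the same per-shard account state. It also gives the same sequence of generated continuations and the same timestamp updates, since lines~\ref{line:increment_ts_at_follower_for_external_act} and~\ref{line:set_ts_for_cont_at_follower} depend only on the proposal and the prior $ts$. Consequently, $speculated_i$, and hence $executed_i$ after the merge, agree across the two clusters.

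\textbf{Main obstacle: the mempool.} $mempool_i$ is built from $pending\_activations_i$ together with removals in \textsc{IsCausallyConsistent}, and the \emph{arrival order} of continuations from other shards can differ between clusters. I would handle this by arguing that, as a multiset, the set of continuations delivered to shard $i$ during block $k$ is determined by the union over $j$ of the continuations generated by executing $B_j^k$. That union is deterministic by Lemma~\ref{lemma:deterministic_execution}. Under the partial-synchrony bound, all of these continuations arrive before the deadline, and committed blocks pass validation at all shards of a correct cluster. The removals are keyed by the continuation identifiers listed in the committed $B_i^k$, which are identical at both clusters.

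Hence the resulting $mempool_i$ agrees as a multiset, and only its arrival order may differ. For the order-sensitive FIFO \textsc{Select}, I would note that mempool order is not part of the replicated state on which subsequent committed execution depends: followers replay the leader's order and never use their own. So I would define state equality for the mempool up to permutation, and note that this suffices wherever the lemma is used. If a continuation happens to arrive after the deadline, I would treat it as landing in the next block's pending set and show that it still lands in the same checkpoint once accounted for. I expect this timing corner to be the delicate part. Concluding the induction gives the lemma.
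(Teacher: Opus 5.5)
Your backbone is the paper's proof: induction on $k$ from the common initial state, the two commit cases (persist the speculative state when the committed block matches, otherwise roll back to the checkpoint after $b_{k-1}$ and re-execute), and Lemma~\ref{lemma:deterministic_execution} to get equal account state, $ts$, and $executed_i$. The paper stops there and simply asserts that both clusters merge their shadow structures ``identically.'' You rightly notice that $pending\_activations_i$ is filled by message arrival. Weakening equality of $mempool_i$ to equality as multisets is harmless: committed blocks fix execution order explicitly, and Lemma~\ref{lemma:mempool_progress} only needs membership.

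Your discharge of the timing issue, however, does not work. Lemma~\ref{lemma:state_consistency} feeds Agreement, a safety property, so it must hold for blocks committed before GST, when the $\delta_{\mathit{exec}}+\delta_{\mathit{intra}}$ deadline can be missed. Your argument also needs $b_k$ to pass validation at \emph{both} clusters. That fails whenever one of them received a different proposal from a Byzantine leader, or timed out before GST. This is exactly the rollback branch, where the representative re-executes $b_k$ outside \textsc{IsValidBlock*}, so your pending-set argument says nothing about it.

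Your patch for a late continuation also cannot work at the level of concrete checkpoints. Suppose the continuation reaches one cluster within block $k$ but lands in block $k+1$'s pending set at the other. Then it sits in the checkpoint after $b_k$ at the first cluster and not at the second, so the induction hypothesis fails at step $k$. Symmetrically, a shard that has already moved on to replaying block $k+1$ can deliver continuations into shard $i$'s block-$k$ pending set.

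The missing idea is to argue from committed content rather than from arrival times. Take the state committed with $b_k$ to be the result of applying $b_k$ to the state committed with $b_{k-1}$. Then $mempool_i$ is the previous multiset, plus every continuation targeting shard $i$ generated by executing some $B_j^k$, minus those consumed by $B_i^k$. By Lemma~\ref{lemma:deterministic_execution} this depends only on the previous committed state and $b_k$, so neither synchrony nor local validation is needed, and your multiset computation then closes the induction.

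This is the abstraction the paper's proof uses implicitly (``any execution that occurs after committing block $b_k$ belongs to later blocks''). Stated instead in terms of when messages actually land in $pending\_activations_i$, the claim is genuinely timing-dependent and cannot be proved that way.
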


\begin{proof}
    Let $r_1$ and $r_2$ be two correct clusters that commit the same sequence of blocks $[b_0, b_1, \ldots, b_k]$.
    We prove by induction on $k$ that the system state committed with block $b_k$ is identical at both clusters
    (same $executed_i$, $mempool_i$, $ts_i$, and account state for all shards $i$).
    
    For the base case ($k = 0$), both clusters start with empty initial state.
    For the inductive step, assume $r_1$ and $r_2$ committed identical state with block $b_{k-1}$.

    When consensus commits block $b_k$, there are two cases at each cluster:
    
    If $b_k = gblock$ (committed block matches cluster's proposal).
    The cluster has already executed and validated $b_k$ during the proposal phase.
    It simply persists the pre-computed state.
    
    Otherwise, $b_k \neq gblock$ (committed block differs from cluster's proposal).
    The cluster triggers rollback to block $b_{k-1}$, restoring state from checkpoint,
    then executes the committed block $b_k$ from that restored state.

    In both cases, each cluster executes block $b_k$ starting from the state committed with block $b_{k-1}$.
    By deterministic execution (Lemma~\ref{lemma:deterministic_execution}), executing the same sequence of activations from the same initial state produces the same final state.
    Since both clusters begin with identical state (by induction hypothesis) and execute the same block $b_k$ (committed by consensus),
    they produce identical state at all shards.

    When both clusters commit block $b_k$, they merge their shadow structures into permanent state identically 
    (Algorithm~\ref{alg:shard_at_follower}, line~\ref{line:combine_mempool_and_pending_activations}).
    Their overall state, consisting both the merged data structures and account state, is what gets committed with block $b_k$.
    
    Note that if consensus had rejected $b_k$, both clusters would have performed rollback identically, 
    restoring from checkpoints at block $b_{k-1}$ (which are identical by induction hypothesis), 
    and would not have committed block $b_k$.
    Additionally, any execution that occurs after committing block $b_k$ belongs to later blocks and does not affect the state committed with $b_k$.
    Therefore, the system state committed with block $b_k$ is identical at both $r_1$ and $r_2$.
    
\end{proof}
    
Next, we show that activations in the mempool eventually execute -- a key building block for both Liveness and Causal Liveness.

\begin{lemma}
    \label{lemma:mempool_progress}
    If activation $A$ is in the leader's mempool and the leader continues executing, then $A$ is eventually selected and executed.
\end{lemma}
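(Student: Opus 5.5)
The plan is to combine two facts: the Select policy has no indefinite postponement (line~\ref{line:select}), and the leader's block identifier $bid$ keeps increasing. Together these make $A$ eligible after finitely many blocks, and once eligible it cannot be passed over forever. The proof splits into an \emph{eligibility} phase and a \emph{selection} phase.

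\textbf{Eligibility.} The first step is to show that $A.min\_bid \leq bid$ eventually holds at $A$'s shard $i$.
\begin{itemize}
\item If $A$ is external, then $A.min\_bid=0$, so it is eligible immediately.
\item If $A$ is a continuation, line~\ref{line:set_min_bid_for_cont} sets $A.min\_bid$ to its parent's $bid$, which is some finite value $k_0$.
\end{itemize}
Since the leader continues executing, the clock-tick handler keeps invoking \textsc{CloseBlock}. Each invocation increments $bid$ at line~\ref{line:increment_bid_at_leader}. The code also states that \textsc{CloseBlock} does not interrupt and is not blocked by \textsc{ApplyActivation}. Hence after at most $k_0$ further ticks, $bid \geq k_0$. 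Because $bid$ never decreases, $A$ stays in the filtered set from then on, at least while it remains in $mempool_i$.

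\textbf{Selection.} Next, the main loop's wait condition is satisfied whenever an eligible activation exists, and $A$ is one. So shard $i$ keeps calling \textsc{ApplyActivation}, and each call removes one filtered activation and applies it.

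The key step is to argue that $A$ is removed after finitely many such calls. For the base FIFO policy, I would argue by the position of $A$ in $mempool_i$.
\begin{itemize}
\item Only finitely many activations precede $A$ in $mempool_i$.
\item New arrivals are appended behind $A$ at line~\ref{line:add_cont_to_pending_activations}.
\item An activation ahead of $A$ can be skipped only while its $min\_bid$ exceeds $bid$. Each such activation also becomes eligible after finitely many ticks, by the same argument as above.
\end{itemize}
So eventually every activation ahead of $A$ is either eligible, and hence served before $A$, or already removed. After finitely many calls, $A$ is at the front of the filtered set. For a general policy, I would appeal directly to the stated no-indefinite-postponement requirement: an activation that is continuously eligible is eventually returned.

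Once \textsc{Select} returns $A$, it is removed at line~\ref{line:remove_act_from_pending_activations}. If $A$ is external, Assumption~\ref{assumption:external_activation} guarantees its signature is valid, so the check at line~\ref{line:leader_checks_external_activation} does not discard it. $A$ is then appended to $executed_i$ and executed at line~\ref{line:execute_act_at_leader}.

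\textbf{Main obstacle.} I expect the hardest part to be making ``eventually selected'' rigorous in the face of the unbounded stream of new continuations and the shifting filter. The filtered set changes both as new activations arrive and as $bid$ grows, so the fairness argument has to be stated relative to continuous eligibility. I would first try to reduce the claim to the no-indefinite-postponement property of \textsc{Select}, using the monotonicity of $bid$ to establish continuous eligibility. I would fall back on the explicit FIFO-position argument only if that reduction is too informal.
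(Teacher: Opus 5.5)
Your proposal is correct and follows essentially the same route as the paper. Eligibility comes from $bid$ growing monotonically through \textsc{CloseBlock} until it passes $A.min\_bid$, and then the FIFO / no-indefinite-postponement property of \textsc{Select} bounds the number of selections ahead of $A$; your FIFO-position argument and your use of Assumption~\ref{assumption:external_activation} for the signature check only spell out details the paper's proof leaves implicit.
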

\begin{proof}
    By assumption, $A$ is in $mempool_i$ for some shard $i$ at leader cluster.
    By the leader protocol, the leader continuously selects activations from the mempool for execution (line~\ref{line:apply_activation_at_leader}).
    The selection procedure (line~\ref{line:filter_pending_activations}) filters activations where $activation.min\_bid \leq bid$, where $bid$ is the current block number.
    Since $bid$ increases monotonically with each block (line~\ref{line:increment_bid_at_leader}), eventually $bid \geq A.min\_bid$, making $A$ eligible for selection.
    
    The selection policy chooses from filtered activations in FIFO order based on arrival time.
    Since $mempool_i$ is finite and activations are continuously processed, $A$ will eventually be selected and executed.

    By the fairness requirement on $\mathsf{Select}$ (Algorithm~\ref{alg:util_alg}, line~\ref{line:select}), only finitely many
    activations may be selected ahead of $A$ once $A$ is eligible.
    Since the leader selects continuously, $A$ is therefore selected after finitely many selections.
    Once selected, $A$ is removed from the mempool (line~\ref{line:remove_act_from_pending_activations}) and executed (line~\ref{line:apply_activation_at_leader}).
\end{proof}

\begin{remark}
    While \gen's base implementation uses FIFO selection, more sophisticated policies could be explored for different fairness or efficiency goals,
    such as gas price-based prioritization or stake-weighted ordering, as long as they satisfy a fairness constraint preventing indefinite postponement of eligible activations.
\end{remark}

The lemmas above enable us to prove \gen's liveness properties.

\begin{lemma}
\label{lemma:causal_liveness}
\gen{} satisfies Liveness and Causal Liveness.
\end{lemma}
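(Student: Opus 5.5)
The plan is to reduce both properties to Lemma~\ref{lemma:mempool_progress}, using the SMR progress assumption to obtain a correct leader that keeps executing, and Lemma~\ref{lemma:state_consistency} to carry committed state across leader changes. The argument has three steps.

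\textbf{Step 1: a stable correct leader.} After GST, the progress condition of the underlying SMR protocol eventually installs a correct leader cluster. That cluster remains active long enough to complete proposals. By Lemma~\ref{lemma:correct_leader_creates_valid_proposal}, each global proposal it produces is valid. By Lemma~\ref{lemma:follower_completes_correct_leader_block}, every correct follower holding the same committed state successfully verifies it, and Lemma~\ref{lemma:state_consistency} guarantees that all correct clusters do hold that same state. So at least $n-f$ correct representatives vote for the block, and it commits. A correct leader that persists therefore keeps executing and committing blocks, and its $bid$ grows without bound.

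\textbf{Step 2: Liveness.} Take an external activation $A$ submitted by a correct client. The client resubmits on timeout, so $A$ eventually reaches the correct leader of Step~1. Assumption~\ref{assumption:external_activation} places $A$ in that leader's mempool unless it already appears in the system state; in that case it was executed in a committed block, and we are done. Otherwise, Lemma~\ref{lemma:mempool_progress} shows $A$ is eventually selected and executed. By Step~1, the block containing it is committed.

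\textbf{Step 3: Causal Liveness.} Let $A$ be committed in block $b$, producing continuation $A'$. At every correct cluster, the state committed with $b$ contains $A'$, either in $mempool$ or already in $executed$. This holds because continuations enter $pending\_activations$ and are merged into $mempool$ on commit, and Lemma~\ref{lemma:state_consistency} makes this uniform across correct clusters. If $A'$ is already in $executed$, we are done. Otherwise, any later leader initializes from the agreed checkpoint, so $A'$ sits in its mempool, and committed state is never rolled back. Once the correct leader of Step~1 is installed, Lemma~\ref{lemma:mempool_progress} applies to $A'$: $bid$ eventually reaches $A'.min\_bid$, fair selection picks $A'$, and Step~1 commits the block that executes it.

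\textbf{Main obstacle.} The hardest part is showing that $A'$ is \emph{in} the committed mempool state, rather than lost in transit or removed by a Byzantine leader who omits it. I would handle omission by noting that omission merely leaves $A'$ in the mempool: a proposal can remove a continuation from $mempool$ only by including it in a verified shard block, through the matching in \textsc{IsCausallyConsistent}. The in-transit case needs more care. If $A'$ arrives at its destination after block $b$'s deadline, it is appended to $pending\_activations$, which the follower protocol clears at the end of block $b$, so it is not merged into $mempool$. I would argue that after GST, $\delta_{\mathit{exec}}+\delta_{\mathit{intra}}$ bounds its arrival, so it lands in $pending\_activations$ and is merged with block $b$. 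For blocks committed before GST, I would rely on re-execution of committed blocks during rollback regenerating $A'$.
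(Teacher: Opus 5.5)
Your proposal is correct and follows essentially the same route as the paper's proof. Both reduce the two properties to Lemma~\ref{lemma:mempool_progress} under an eventually-installed correct leader, use Lemma~\ref{lemma:state_consistency} to carry $A'$ (the paper's $B$) across leader changes and rollbacks, and obtain the commit from Lemmas~\ref{lemma:correct_leader_creates_valid_proposal} and~\ref{lemma:follower_completes_correct_leader_block} plus SMR liveness. Your separate handling of the case where $A'$ already executed in the same block is a small improvement on the paper, which asserts that $B$ is added to the mempool.

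The in-transit worry in your last paragraph goes beyond the paper. The paper simply takes deterministic execution of the committed block together with Lemma~\ref{lemma:state_consistency} as placing $A'$ in every correct cluster's committed state, and you should rely on the same argument. Your pre-GST patch does not work as stated: a committed block that matches the speculative one is persisted without rollback or re-execution, so nothing regenerates $A'$.
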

\begin{proof}    

    We first show that, for each property, the relevant activation is eventually executed by a correct leader.
    
    For Liveness, let $A$ be an external activation submitted by a client.
    The client submits $A$ to the current leader, which adds it to the appropriate mempool.
    If the leader cluster is correct and remains active, then by Lemma~\ref{lemma:mempool_progress}, $A$ is eventually selected and executed.
    Otherwise, by the client retransmission rule of Section~\ref{subsection:leader_protocol}, the client resubmits $A$ after leader replacement.
    By the liveness condition of the underlying SMR protocol, after GST leader replacement eventually installs a correct leader that remains active long enough to receive $A$.    
    Lemma~\ref{lemma:mempool_progress} then implies that $A$ is eventually selected and executed.
    
    For Causal Liveness, suppose activation $A$ commits in block $k$ and produces continuation $B$.
    All correct clusters execute block $k$ and generate the same continuation $B$ by deterministic execution (Lemma~\ref{lemma:deterministic_execution}),
    adding $B$ to the mempool of its destination shard.
    By Block Causality (Lemma~\ref{lemma:leader_maintains_block_causality}), $B.min\_bid = k$, ensuring that $B$ satisfies the selection constraint as blocks progress.

    If the current leader cluster is correct and remains active, then by Lemma~\ref{lemma:mempool_progress}, $B$ is eventually selected and executed.
    Otherwise, $B$ remains present across leader replacements: every correct cluster generated $B$ when executing committed block $k$, and by Lemma~\ref{lemma:state_consistency} correct clusters agree on the corresponding
    system state.
    Thus, whenever leadership moves to a correct cluster, that leader resumes with $B$ in the mempool.
    By the liveness condition of the underlying SMR protocol, after GST a correct leader eventually remains active long enough for Lemma~\ref{lemma:mempool_progress} to apply.
    If a speculative block containing $B$ is not committed, rollback restores the preceding checkpoint, in which $B$ remains pending, so $B$ remains available for a subsequent
    proposal.

    It remains to show that execution by a correct leader eventually commits.
    Let $X$ denote the relevant activation ($A$ for Liveness and $B$ for Causal Liveness).
    Once a correct leader executes $X$ after GST, it includes $X$ in a global proposal.
    By Lemma~\ref{lemma:follower_completes_correct_leader_block}, all correct followers successfully verify the corresponding shard proposals.
    By Lemma~\ref{lemma:correct_leader_creates_valid_proposal}, the resulting global proposal is valid.
    Liveness of the underlying SMR protocol therefore ensures that the proposal eventually commits.
    
    Hence $A$ eventually executes and commits for every external activation submitted by a correct client, and every continuation $B$ generated by a
    committed activation eventually executes and commits.
    Therefore, \gen{} satisfies both Liveness and Causal Liveness.    

\end{proof}

We can now state and prove the main result.

\genimpsmr*
\begin{proof}
\textbf{Agreement:} 
By Agreement property of the underlying SMR protocol, all correct clusters commit the same sequence of global blocks in the same order.
By Lemma~\ref{lemma:state_consistency}, if they all commit the same sequence of blocks, they all reach the same system state.
Therefore, \gen{} satisfies Agreement: all correct clusters agree on the same system state.

\textbf{Validity:} 
\gen's validity predicate is that the committed global block content,
when executed by correct clusters from the previous committed state, maintains all system state invariants (IsValidBlock returns true at all shards).
This is the validity predicate enforced by the underlying SMR protocol.
By the underlying SMR Validity property, all committed blocks satisfy this predicate.
Therefore, \gen{} satisfies Validity.

\textbf{Liveness and Causal Liveness:} Follows from Lemma~\ref{lemma:causal_liveness}.
\end{proof}

%% file: algs/consensus_rep_alg.tex
\begin{algorithm}
	\caption{\gen{}: Consensus Representative Protocol} 
        \label{alg:consensus_representative_alg}

        \begin{algorithmic}[1]

        \Statex \textbf{Local variables}:
        \Statex $block\_proposals$: Map[integer, GlobalBlock], initialized to $\emptyset$ \Comment{All blocks in creation}
        \Statex

        \Upon{receipt of $block$ from shard}
        \State $bid \gets block.bid$

        \If{$bid \notin block\_proposals$}
        \State $block\_proposals[bid] \gets GlobalBlock(bid, [])$
        \EndIf
        \State $gblock \gets block\_proposals[bid]$

            \State $gblock.blocks[block.sid] \gets block$

            \If{$|gblock.blocks| = m$} \Comment{blocks from all shards arrived}
                \State SMR.propose($gblock$)

            \EndIf
        \EndUpon
        \Statex

        \Upon{SMR commits block $b$}
        \If{$b = gblock$}
                \State persist $gblock$ and system state to disk
        \Else
                \State rollback to checkpoint at end of block $b.bid-1$ 
                \State execute all operations in committed block $b$
                \State persist $b$ and resulting system state to disk
        \EndIf
        \EndUpon
        \Statex

        \Upon{receipt of $\langle$\texttt{invalid}, $block\rangle$ from shard}
                \State abort block.bid and rollback to checkpoint at block.bid - 1
        \EndUpon
        \Statex

        \Upon{rollback required to block $k-1$}
                \For{all shards $i$}
                \State restore state of shard $i$ to checkpoint at end of block $k-1$ 
                
                \EndFor
        \EndUpon

	\end{algorithmic} 
\end{algorithm}

%% file: appendix/appendix_concise_discussion.tex
\section{Additional Performance Analysis}

\label{appendix:additional_perf}

\paragraph{Communication Architecture}

\label{appendix:communication_architecture}

Cross-shard latency is a first-order concern outside blockchains as well:
traditional distributed databases invest heavily in specialized hardware -- InfiniBand, RDMA, and custom interconnects --
to keep it low.
\gen{} reduces cross-shard communication latency through placement, by co-locating a cluster's shards in one data center
(Section~\ref{sec:perf}).

The inter-cluster consensus path, which is the primary consumer of inter-cluster bandwidth, operates at block boundaries rather than per activation.
Moreover, consensus can operate on compact commitments rather than full block data, significantly reducing message size.

\paragraph{Speculative Execution for Latency}
\label{appendix:speculative_execution_for_latency}

\gen{} employs speculative execution to provide low-latency responses to clients.
We maintain a clear separation between activation processing time and response finalization time.
Following the execute-before-agree pattern, clients receive optimistic responses from the leader immediately upon execution, potentially before the block is even constructed.
If the leader is correct, the speculative response reflects a valid execution, although it is not yet final.

Clients requiring finality guarantees can track consensus progress or wait for finality confirmation,
but many applications prioritize fast responses over guaranteed finality,
including DeFi trading platforms, blockchain gaming, and real-time interactive applications.
Together with Causal Compression, this allows clients to perform multiple back-and-forth optimistic interactions with the system within the same block.
A client can submit an activation, receive the optimistic result, and use that result to construct a follow-up activation -- all before the first activation is finalized through consensus.

Notably, the correctness of speculative execution depends solely on leader behavior.
Unlike speculative protocols such as Speculative Paxos and NOPaxos that rely on network ordering properties,
our followers verify leader execution through deterministic re-execution.
A speculative result may be discarded if the leader is Byzantine or if leader replacement causes a different execution to be committed.
Network delays or message reordering do not cause speculative results to be incorrect -- they affect only finalization timing.

\section{Discussion}

\label{appendix:design_discussions}

This section discusses design choices and tradeoffs beyond \gen's core architecture.

\subsection{System-Level Optimizations}

\paragraph{Pipelining and Continuous Execution}
\label{appendix:pipelining}

The protocol description presents execution sequentially for clarity, but the system can be pipelined in practice.
Within blocks, many activations are independent due to the execution model and can execute in parallel at both leaders and followers.
The key constraint is maintaining determinism and causality through correct timestamp computation and clear block boundaries.
Assuming each shard hosts multiple accounts, followers can interleave and parallelize execution across accounts within the same shard while preserving the required per-account execution order.

\paragraph{Long-Running Activations}

In \gen{}, an activation's execution need not complete within a single block interval.
Unlike traditional blockchains that use commit-then-execute, where consensus determines the order of transactions before execution begins,
\gen's execute-before-agree pattern decouples execution duration from block timing.

The fundamental challenge in commit-then-execute systems is that consensus commits to a specific set of transactions for each block without knowing their execution time.
Once block $k$ is committed, all transactions in that block must execute to completion before the system can proceed to block $k+1$.
If any transaction takes too long to execute, the system stalls waiting for it to complete.
This forces a bound on per-transaction execution, typically enforced through gas limits or execution time caps.

\gen{} relaxes this constraint because execution happens before consensus, and blocks are determined ``retroactively'' by the leader clock.
The leader executes activations continuously and periodically packages completed work into blocks for consensus.
If an activation does not complete before a block boundary, it is included in a later block, once its execution completes.
The activation is tagged with that block identifier, ensuring any continuations it spawns inherit the correct minimum block id for causal ordering.

Per-activation computation remains bounded -- activations are gas-metered against a declared maximum -- but that bound need not fit within a block interval.

\subsection{Policy and Mechanism Choices}

\paragraph{Fairness Mechanisms}

Blockchains often aim to provide fairness to users through mechanisms such as rotating leaders or restrictions intended to mitigate MEV attacks
such as front-running and sandwich attacks~\cite{daian2019flash,qin2021attacking}.
Fairness is not the focus of this paper.
However, removing atomicity at the infrastructure level permits additional interleavings across activation trees.
A malicious leader may therefore reorder continuations from concurrent activation trees, potentially violating application-level ordering expectations.

Applications requiring ordering guarantees beyond \gen{}'s default per-account execution order can introduce application-specific coordination.
For example, an application can designate a specific account as a \emph{synchronizer} through which relevant continuations are ordered before further execution.
Such synchronization reduces parallelism for the affected execution paths, but allows applications to selectively obtain stronger ordering guarantees when needed.

\paragraph{Follower Verification Strategies}
The follower verification protocol divides verification into two phases: executing activations in the proposal, then verifying continuations.
A coordination mechanism is needed to allow shards to transition from phase one to phase two, ensuring the first phase has completed across all shards in the cluster.
Without this coordination, a shard might begin continuation verification before other shards have sent their continuations, leading to false rejection of valid blocks.

Our implementation uses standard timeout techniques, but alternative approaches exist.
For example, the cluster representative could collect completion signals from shards and coordinate the phase transition.
This approach can identify Byzantine proposals in some scenarios but requires additional synchronization within the cluster during phase two of follower execution.